\newtheorem{example}{Example}
\theoremstyle{plain}
\newtheorem{theorem}{Theorem}[section]
\newtheorem{lemma}[theorem]{Lemma}
\theoremstyle{definition}
\newtheorem{definition}[theorem]{Definition}
\theoremstyle{remark}
\newtheorem{remark}[theorem]{Remark}
\newcommand{\vu}{\mathbf{u}}
\DeclareMathAlphabet{\pazocal}{OMS}{zplm}{m}{n}
\newcommand{\R}{\mathbb{R}}
\newcommand{\N}{\mathbb{N}}
\newcommand{\E}{\mathop{\mathbb{E}}}
\newcommand{\cX}{\pazocal{X}}
\newcommand{\cF}{\pazocal{F}}
\newcommand{\cM}{\pazocal{M}}
\newcommand{\cD}{\mathcal{D}}
\newcommand{\cH}{\pazocal{H}}
\newcommand{\cG}{\pazocal{G}}
\newcommand{\cN}{\pazocal{N}}
\newcommand{\cP}{\pazocal{P}}
\newcommand{\cA}{\pazocal{A}}
\newcommand{\cZ}{\pazocal{Z}}
\newcommand{\cT}{\pazocal{T}}
\newcommand{\fL}{\mathfrak{L}}
\newcommand{\argmax}{\textnormal{argmax}}
\renewcommand{\P}{\mathbb{P}}
\newcommand{\pdim}{\textnormal{Pdim}}
\newcommand{\ldim}{\textnormal{Ldim}}
\newcommand{\sldim}{\textnormal{SLdim}_{\cG}}
\newcommand{\VC}{\textnormal{VCdim}}
\newcommand{\sndim}{\textnormal{SNdim}^{(\gamma)}_{\cG}}
\newcommand{\sgdim}{\textnormal{SGdim}^{(\gamma)}_{\cG}}
\newcommand{\vx}{\mathbf{x}}
\newcommand{\vz}{\mathbf{z}}
\newcommand{\cE}{\mathcal{E}}
\newcommand{\cJ}{\mathcal{J}}
\newcommand{\xhdr}[1]{\vspace{0.5mm} \noindent{\bf #1}}
\icmltitlerunning{Learning in Structured Stackelberg Games}
\begin{document}

\twocolumn[
  \icmltitle{Learning in Structured Stackelberg Games}



  \icmlsetsymbol{equal}{*}

  \begin{icmlauthorlist}
    \icmlauthor{Maria-Florina Balcan}{equal,cmu}
    \icmlauthor{Kiriaki Fragkia}{equal,cmu}
    \icmlauthor{Keegan Harris}{equal,berk}
  \end{icmlauthorlist}

  \icmlaffiliation{cmu}{Carnegie Mellon University, Pittsburgh, USA}
  \icmlaffiliation{berk}{University of California, Berkeley, USA}

  \icmlcorrespondingauthor{Maria-Florina Balcan}{ninamf@cs.cmu.edu}
  \icmlcorrespondingauthor{Kiriaki Fragkia}{kiriakif@cs.cmu.edu}
  \icmlcorrespondingauthor{Keegan Harris}{keegan.harris@berkeley.edu}

  \icmlkeywords{Machine Learning, ICML}

  \vskip 0.3in
]



\printAffiliationsAndNotice{}  

\begin{abstract}
  We initiate the study of \emph{structured Stackelberg games}, a novel form of strategic interaction between a leader and a follower where contextual information can be predictive of the follower's (unknown) type. 
Motivated by applications such as security games and AI safety, we show how this additional structure can help the leader learn a utility-maximizing policy in both the online and distributional settings.
In the online setting, we first prove that standard learning-theoretic measures of complexity do not characterize the difficulty of the leader's learning task.
We find that there exists a learning-theoretic measure of complexity, analogous to the Littlestone dimension in online classification, that \emph{tightly} characterizes the leader's instance-optimal regret.
We term this the \emph{Stackelberg-Littlestone dimension}, and leverage it to provide a provably optimal online learning algorithm. 
In the distributional setting, we provide analogous results by showing that two new dimensions control the sample complexity upper- and lower-bound.

\end{abstract}

\section{Introduction}
 
Stackelberg games are the canonical framework for analyzing strategic interactions with commitment, where one player (the leader) commits to a strategy before another (the follower) observes this strategy and best-responds. While initially introduced in the context of market competition \cite{stackelberg1934marktform}, Stackelberg games have been applied across a wide range of domains including security, where a security agency commits to a patrol strategy before attackers choose vulnerable targets to attack \cite{Sinha2018StackelbergSG}, congestion control, where a city sets tolls before drivers choose routes \cite{swamy2007effectiveness}, and AI red-teaming, where AI providers deploy a model before users find adversarial use cases \cite{shivadekar2025red, liu2025chasingmovingtargetsonline}.\looseness-1 

\emph{Contextual Stackelberg games} \cite{harrisregret, balcan2025nearlyoptimalbanditlearningstackelberg} are a generalization of this model, in which both players' payoffs depend on additional side information. This model is motivated by many applications where players have payoff-relevant information about the state of the world at the time of play, which influences their strategies (and subsequently the outcome of the game). 
Side information may take the form of a feature vector and/or an embedding of natural language. 
For example, a wildlife park’s security might place varying levels of importance on patrolling different areas of the park depending on the time of year, animal migration patterns, and recent reports describing suspicious activity or poaching risk. 
Contextual Stackelberg games have also been used to do label-efficient fine-tuning of large language models~\cite{wang2026optimal} and reward shaping for inference--time alignment \cite{wang2026rewardshapinginferencetimealignment}.\looseness-1 

Interestingly, learning under this more general framework is not always possible. Specifically, \citet{harrisregret}  study the \emph{online} setting (where the leader faces a sequence of followers with unknown utilities) and provide no-regret learning guarantees for the leader as long as either the followers or contexts are drawn i.i.d. from a fixed distribution. However, when both contextual information and follower types are chosen adversarially, the worst-case regret can grow linearly with the time horizon. 
This lower bound follows from a reduction to online classification, where an adversary can encode a mapping from contexts to follower types that is arbitrarily difficult to learn online. 

In this work, we ask how broadly this impossibility result applies, and we investigate the landscape of learnability in this setting.
Specifically, we examine learnability in contextual Stackelberg games under an arbitrary, unknown mapping from contexts to follower types. This setting naturally captures domains where contextual data contains some signal that is predictive of the follower about to appear next. In wildlife protection for example, data from motion sensors and camera footage is routinely collected and used to predict the intrusion type \cite{s18051474}. Similarly in AI safety, different types of attack may be more or less viable depending on the domain the model is deployed in. We capture such applications via the \emph{structured} Stackelberg game model, which we introduce.

We study the following fundamental learnability questions:\looseness-1

{\centering
\emph{(1) In which structured Stackelberg games can the leader learn a utility-maximizing policy?\\ (2) Are there natural dimension(s) that characterize learnability in this setting?\looseness-1} \par}

\subsection{Our contributions}

\xhdr{Instance-optimal regret bounds.} We initiate the study of online learning in contextual Stackelberg games under an arbitrary, unknown mapping from contexts to follower types. In particular, we study how the leader can learn a utility-maximizing policy, given a hypothesis class rich enough to include this underlying mapping. 

Naturally, the difficulty of the leader's learning task scales with the complexity of the hypothesis class. In the learning theory literature, the (multiclass) Littlestone dimension \cite{littlestone1988learning, pmlr-v19-daniely11a} is used to characterize the complexity of online classification in settings with a finite set of labels (which, in structured Stackelberg games, corresponds to the set of follower types). We prove that the Littlestone dimension can give arbitrarily loose regret bounds in our Stackelberg game setting (\Cref{thm:maximallydifferent}). Our construction leverages the fact that the Littlestone dimension is completely blind to the utility space defined by the Stackelberg game, which ultimately determines the performance of the leader.

We introduce a new notion of complexity, the Stackelberg-Littlestone dimension (\Cref{def:sldim1}; henceforth SL dimension), designed to capture the joint complexity of the hypothesis class and the utility space of the Stackelberg game. Our definition is based on the notion of shattered trees, which are a classic tool in online learning.
However in contrast to previous work, our setting involves a discrete set of follower types, yet a continuous utility function, so constructing the trees requires a more nuanced approach. We prove that the SL dimension tightly characterizes the complexity of learning in structured Stackelberg games. Specifically, we prove that for any hypothesis class and any Stackelberg game, the SL dimension both upper- and lower-bounds the optimal regret for the leader (\Cref{thm:lb}, \Cref{thm:upper}). The proof of our upper bound  is constructive (\Cref{alg:newssoa}), which yields a complete characterization of learnability in online structured Stackelberg games.\looseness-1

    \xhdr{PAC learning guarantees.} We also formalize the study of \emph{distributional} learning in structured Stackelberg games (\Cref{sec:distributional_learning}). In this setting, the leader has access to a set of (context, follower type) examples, where the contexts are drawn from some fixed, unknown distribution and the follower types are determined by some hypothesis in the hypothesis class. Upon the draw of a new context, the leader's goal is to deploy a strategy such as to maximize her expected utility.\looseness-1
    
    We show generalization guarantees with probably approximately correct (PAC)-style bounds. We introduce the notion of the $\gamma$-SN dimension (\Cref{def:gamma-sn-shattered,def:sn_dim}) and the $\gamma$-SG dimension (\Cref{def:gamma-sg-shattered,def:sg_dim}), which control the lower- and upper-bounds on the sample complexity (\Cref{thm:dist_lb,thm:dist_ub}). Our definitions are inspired by analogous notions in classification and regression \cite{natarajan1989learning, Attias23:regression, daniely2014:optimal}, but we rely on a tailored approach to what constitutes a ``mistake'' in our setting. 
    Our approach leverages the discreteness of the set of follower types to determine how much hypotheses differ in a minmax sense. We give a simple, improper algorithm (\Cref{alg:distributional}) that with high probability achieves nearly-optimal expected utility with as many samples as in our sample complexity upper bound.\looseness-1
    
    In the Appendix, we analyze a simpler, context-free offline setting, in which the training set is only over the distribution of follower types. Using tools from data-driven algorithm design, we show strong generalization guarantees (\Cref{thm:gen-guarantee}). 
    Moreover, all of our results directly extend to the related problems of learning to bid in auctions with side information and Bayesian persuasion with public and private states. We discuss this in~\Cref{sec:others}.
    Finally, computational considerations are discussed in~\Cref{sec:efficient}.\looseness-1

\subsection{Related work}
\xhdr{Learning in Stackelberg games.} 

The problem of learning an optimal strategy for the leader with a single unknown follower type has been studied by \citet{letchford2009learning} and \citet{blum2014learning}. These results have been extended by \citet{peng2019learning}, who give faster rates and almost-matching lower bounds. Recently, \citet{bacchiocchi2024sample} relax some structural assumptions from previous work, including tie-breaking rules for the follower, and revise worst-case sample complexity results of learning the leader's optimal mixed strategy.

Online Stackelberg games with multiple follower types were introduced in \citet{balcan2015commitment} and extended to online meta-learning in \citet{DBLP:conf/iclr/HarrisAFK0S23}. The authors show that when a series of unknown attacker types is chosen adversarially, playing the Hedge algorithm ~\cite{freund1997decision} over a carefully selected set of mixed strategies is sufficient to obtain no-regret.\looseness-1

 Our departing point is the work of \citet{harrisregret}, in which the authors study the problem of online learning in \emph{Stackelberg games with side information}. The authors show that when either the sequence of follower types or the sequence of contexts are drawn independently from some fixed, unknown distribution, no-regret learning is indeed possible. 
 Follow-up work~\cite{balcan2025nearlyoptimalbanditlearningstackelberg} obtains better regret guarantees under bandit feedback. 
 However, when both sequences are chosen adversarially, regret can scale linearly in the worst case, as the problem can be reduced from online learning linear thresholds. We circumvent this impossibility result by making an additional structural assumption about the relationship between contexts and follower types---namely that it belongs to a class of functions that the leader has access to. Under this assumption, we show that online learning is indeed possible, and our guarantees depend on an appropriate measure of complexity of this function class. 

\citet{wang2026optimal} cast the problem of budget-aware supervised fine-tuning of large language models as a contextual Stackelberg game, and provide results for the setting where the leader does not observe the follower's type by default, but can query it at a cost.

\xhdr{Multiclass learnability.} Our work can also be viewed through the lens of \emph{multiclass learnability}, as it suffices for the leader to (1) learn the hypothesis class of the function mapping the contexts to a (finite) set of follower types, then (2) compute the optimal strategy against the predicted follower. Multiclass prediction has been studied extensively both in the distributional (\cite{natarajan1989learning, pmlr-v19-daniely11a, brukhim2022characterization}) and online (\cite{pmlr-v19-daniely11a, hanneke2023multiclass}) settings. 
Our results show that  just relying on multiclass prediction tools that do not leverage the additional structure of the problem can lead to loose regret guarantees. For example, it is possible to construct a problem instance in which the leader can learn to play optimally under a hypothesis class that has infinite Littlestone dimension.

\xhdr{Learning theory beyond multiclass prediction.} 
In online realizable regression \citet{daskalakis2022fast} designed a proper algorithm that achieved near-optimal cumulative absolute loss with respect to the sequential fat-shattering dimension of the hypothesis class. Our work is more closely related to \citet{Attias23:regression}, who introduce the online (resp. $\gamma$-Graph) dimension of a function class and show that it tightly characterizes online regression (resp. distributional learning) in the realizable setting up to a factor of 2. While we also consider utility functions with continuous image sets, the dimensions we introduce explicitly exploit the discrete structure of the Stackelberg game that arises from the finite set of follower types. 
\citet{ahmadi2024strategic} introduces a new combinatorial measure, the Strategic Littlestone Dimension, to achieve instance optimal mistake bounds in the online strategic classification setting. While both their measures of complexity and ours characterize learning in various principal-agent settings with commitment, the two are generally incomparable due to the differences in our settings.

\xhdr{Learning piecewise Lipschitz functions.} Given the piecewise linear structure of the leader's utility function, our work also connects to the line of work on learning functions that are piecewise Lipschitz \cite{8555142, Sharma2019LearningPL, balcan2020semi, 10.5555/3540261.3541415, balcan2018general}. Specifically, we draw from this literature in the part of our work that explores a simpler distributional setting without contexts (\Cref{sec:uncontextual}). In particular, the well-structured payoff function of the leader against a fixed follower type allows us to adopt tools from \citet{balcan2018general} and get better generalization guarantees compared to prior work that studies a similar setting \cite{letchford2009learning}.\looseness-1  
\section{Setting and notation}
We use $[N] := \{1, 2, \ldots, N\}$ to denote the set of natural numbers up to and including $N$ and $\Delta(\cA)$ to denote the probability simplex over a (finite) set $\cA$. For strings, we use inclusive slice indexing such that $x[i:j]$ denotes the entries of $x$ from position $i$ to $j$. Dropping either $i$ or $j$ means that the slice extends from the beginning of the string or to the end, respectively. We also use negative indexing, so that, e.g., $x[-1]$ is the last element of $x$.
All proofs are deferred to the Appendix.\looseness-1 

We consider a Stackelberg game between a leader and a follower. Before the game starts, some contextual information $\vz \in \cZ \subseteq \mathbb{R}^d$ is revealed to both players. The leader plays first and commits to a mixed strategy $\vx \in \Delta(\cA)$, where $\cA$ is the set of finite actions available to the leader\footnote{Our results hold for any strategy set that the leader can optimize over.}. The follower \textit{best-responds} to the leader's strategy and the context by playing some action $a_f \in \cA_f$, where $\cA_f$ is the finite set of follower actions.\footnote{WLOG the follower's best-response is a pure strategy.} The follower's best-response is defined as\looseness-1 
\begin{align*}
    b_f(\vz, \vx) \in \arg\max_{a_f \in \cA_f} \sum_{a_l \in \cA} \vx[a_l] \cdot u_f(\vz, a_l, a_f),
\end{align*}
where, $u_f: \cZ \times \cA \times \cA_f \rightarrow [0,1]$ is the follower's utility function. In case of ties, we assume that the follower uses a fixed, known, and deterministic tie-breaking order over $\cA_f$.\looseness-1

We allow the follower to be one of $K$ follower types in the set $\{f^{(1)}, \dots, f^{(K)}\}$. The type of follower characterizes their utility function, $u_{f^{(i)}}:\cZ \times \cA \times \cA_f \rightarrow [0,1]$.
We assume that the leader knows the set of possible follower types \emph{a priori}\footnote{This assumption is for ease of exposition. Our results depend only on the leader knowing the best response function for each follower type.}, but not the realized type and aims to maximize her own utility, $u:\cZ \times \cA \times \cA_f \rightarrow [0, 1]$. 
By slight abuse of notation, we sometimes use the shorthand $u(\vz, \vx, a_f) := \sum_{a_l \in \cA} \vx[a_l] \cdot u(\vz, a_l, a_f).$

 We will denote a Stackelberg game as a tuple, $\cG = \left( \cA, \cA_f, \cZ, u, u_{f^{(1)}}, \ldots, u_{f^{(K)}} \right) $, consisting of the leader action set, the follower action set, the set of possible contexts, as well as the leader and followers' utility functions. 

\begin{remark}\label{remark:distinct}
    For the remainder of the paper, and without loss of generality, we assume that the follower types in $[K]$ are \emph{distinct}. We say that two follower types are distinct if there exists some context for which there is no single leader strategy is simultaneously optimal against both types. 
\end{remark}

Let $\cH \subseteq [K]^{\cZ}$ be a set of candidate functions (also called a \emph{hypothesis class}) mapping contexts to follower types, which the leader uses to learn. Throughout, we focus on the \emph{realizable setting}, i.e., we assume that $\cH$ is sufficiently rich to contain a perfect mapping $h^{*} \in \cH$, where $h^{*}(\vz_t) = f_t$ for all $t \in [T]$. 
We use $h^*$ as a benchmark for the leader's performance. For a single interaction with a follower at context $\vz$, we write the leader's loss when playing strategy $\hat{\vx} \in \Delta(\cA)$ as\looseness-1
\begin{align*}
    r(\vz, \hat{\vx}, f^{(h^*(\vz))}) &= \sup_{\vx \in \Delta(\cA)} u \left(\vz, \vx, b_{f^{(h^*(\vz))}}(\vz, \vx) \right)\\ &- u \left(\vz, \hat{\vx} , b_{f^{(h^*(\vz))}}(\vz, \hat{\vx}) \right).
\end{align*} 
We let $\pi_{h^{*}}(\vz) \in \arg\sup_{\vx \in \Delta(\cA)} u(\vz, \vx, b_{h^{*}(\vz)}(\vz, \vx))$ denote the optimal policy for the leader.\looseness-1

For a hypothesis class $\cH$, we denote its projection on a point as $\cH(\vz) = \{i \in [K]: \exists h \in \cH \; \text{s.t.} \; h(\vz) = i\}$ and on a set as $\cH(\{\vz_i\}_{i \in [n]}) = \{(h(\vz_1), \dots, h(\vz_n)): h \in \cH \}$. For $i \in [K]$, we write $\cH^{(\vz \to i)} = \{h \in \cH : h(\vz) = i\}$. 
\section{Online learning}\label{sec:online}

In the online setting, the leader is faced with a sequence of followers. In each round $t \in [T]$, a context $\vz_t \in \cZ$ is revealed to both players. A follower of type $f_t$ arrives, the leader commits to a mixed strategy $\vx_t \in \Delta(\cA)$, and the follower best-responds. We assume that the leader knows the utilities of all follower types, but the type of the follower at time $t$ is not revealed to the leader until after the round is over. 
We make no assumptions about how the sequence of contexts is chosen, i.e., they may be chosen adversarially.\looseness-1

Next we present the notion of \textit{contextual Stackelberg regret}, which quantifies the leader's performance over time.
\begin{definition}[Contextual Stackelberg Regret]
    Given a sequence of pairs of contexts and followers $\{(\vz_t, f_t)\}_{t=1}^T$ the leader's regret is $R(T) := \sum_{t=1}^T r(\vz_t, \vx_t, f_t)$
    where $\vx_t \in \Delta(\cA)$ is the leader's strategy at timestep $t$.
\end{definition}

The goal of the leader is to minimize her regret, that is, to achieve utility comparable to the utility she would have obtained if she could correctly predict the follower's type given the context and play optimally against it. 

\begin{remark}\label{remark:deltaclose}
   Our analysis relies on the leader being able to compute the optimal strategy to play for any fixed context and follower type. However, due to the tie-breaking rules of the follower, the leader may not be able to get an exact solution. Previous work \cite{balcan2015commitment,harrisregret} shows that the leader can get an arbitrarily small error in the utility by optimizing over an appropriately selected subset of the simplex. For ease of exposition, we assume that the leader can directly optimize over the entire simplex, without affecting her regret guarantees. A formal statement of this result appears in \Cref{sec:subtlety}.
\end{remark}

\subsection{Littlestone dimension}
Given the structure induced by the finite set of follower types, a natural idea is to directly predict the follower's type from the context, and play the corresponding optimal strategy. This suggests using tools from online (multiclass) classification to characterize the statistical complexity of the problem. 
In this section, we will show that this approach comes up short.  

We begin by recalling several concepts from online learning that are relevant to our work, starting with the notion of \emph{shattering}, which captures the intuition that richer hypotheses classes can encode more complex label patterns.\looseness-1

\begin{definition}[Multiclass Littlestone Tree]\label{def:lit_tree}
     A Littlestone tree is a rooted tree, in which each internal node is labeled with an instance in $\cZ$ and each edge corresponds to a label in a finite set $[K]$. We say that the tree is \emph{shattered} by a hypothesis class $\cH$ if for every root-to-leaf path that traverses nodes $\vz_1, \dots, \vz_n$, there exists a hypothesis $h \in \cH$ such that the label of the edge in the path leaving $\vz_i$ is $h(\vz_i)$ for all $i$.
\end{definition}

\begin{example}\label{example1}
Consider the class of permutations
$\mathcal{H}_3
= \left\{\, h : \{1,2,3\} \to \{f^{(1)}, f^{(2)}, f^{(3)}\}
\ \middle|\ h \text{ is a bijection}\,\right\}$. \Cref{fig:littlestonetree} shows an example of a Littlestone Tree that is shattered by this class.\looseness-1
\end{example}

\begin{figure}[t]
    \centering
    \includegraphics[height=1.5in]{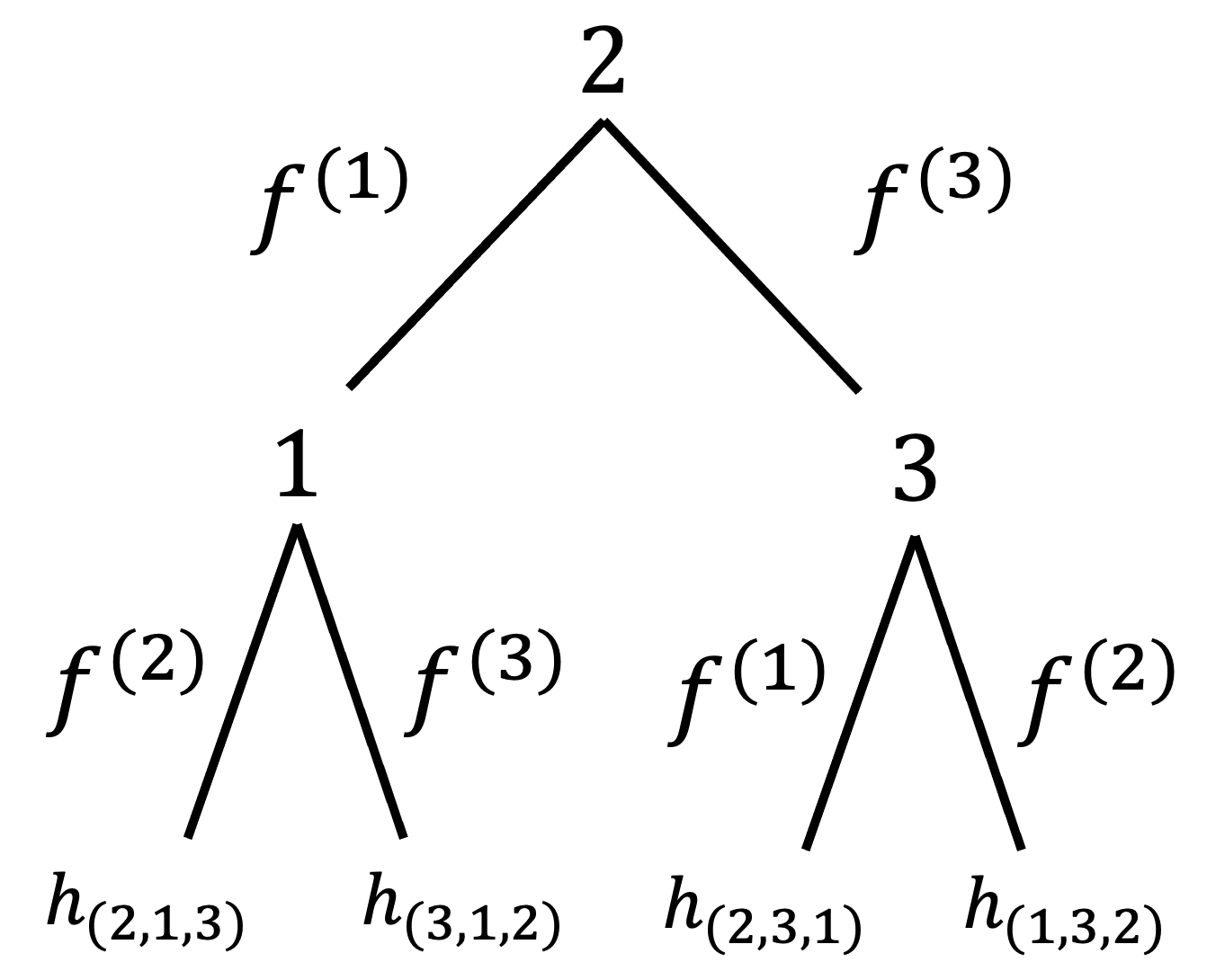}
    \caption{$\mathcal{H}_3$-shattered Littlestone Tree showing the consistent hypothesis for each root-to-leaf path. $h_{(a,b,c)}$ denotes the hypothesis with $h(1)=f^{(a)}$, $h(2)=f^{(b)}$, and $h(3)=f^{(c)}$.}
    \label{fig:littlestonetree}
\end{figure}

The \emph{multiclass Littlestone dimension}  reflects how the ability of a function class to shatter deeper trees corresponds to greater complexity in the class.\footnote{For binary labels and when all nodes at the same depth of a shattered tree are the same, the Littlestone dimension recovers the definition of the well-studied VC-dimension.} 
This definition is due to~\citet{pmlr-v19-daniely11a} and it generalizes the classic definition of the Littlestone dimension for binary classification~\cite{littlestone1988learning}.\looseness-1
\begin{definition}[Multiclass Littlestone Dimension]
    The Littlestone dimension of a multiclass hypothesis class $\cH$, denoted $\ldim(\cH)$, is the maximal integer $d$ such that there exists a full binary tree of depth $d$ that is shattered by $\cH$.
\end{definition}

The finiteness of the Littlestone dimension of a hypothesis class is necessary for a hypothesis class to be online learnable \cite{littlestone1988learning}. 
Some examples of classes with finite Littlestone dimension are conjunctions, decision lists, linear separators with a margin, and in general anything learnable in the mistake bound model (cf. \citet{balcannotes:mistakebound}, Proposition 17 in \citet{Alon21:partial}).

\citet{pmlr-v19-daniely11a} show that the multiclass Littlestone dimension of a hypothesis class characterizes learnability in online multiclass classification, that is, it determines the number of mistakes an optimal online learner will make in the worst-case. In our setting, learning how to optimally predict the follower's type given a context is an instance of multiclass prediction and is indeed \emph{sufficient} for learning an optimal policy. Namely, given the context and the predicted follower type, the leader can compute the strategy that maximizes her utility by solving a sequence of $|\cA_f|$ linear programs \cite{conitzer2006computing}. 

However, it is \emph{a priori} unclear whether learning to optimally predict the follower's type is not only sufficient but also necessary. 
We now show that this is \emph{not} the case, by constructing a structured Stackelberg game instance with an infinite Littlestone dimension that can be solved without learning.\looseness-1

\begin{restatable}{theorem}{maximallydifferent}\label{thm:maximallydifferent}
    There exists a Stackelberg game $\cG$ and a hypothesis class $\cH$ such that $\ldim(\cH) = \infty$, but the leader's optimal policy can be determined without learning. 
\end{restatable}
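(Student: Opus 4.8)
The plan is to exhibit a single structured Stackelberg game $\cG$ together with a hypothesis class $\cH$ so that (a) $\cH$, viewed as a multiclass hypothesis class mapping $\cZ \to [K]$, has infinite Littlestone dimension, but (b) the leader has a single strategy (or a context-dependent strategy that does not depend on the realized follower type) that is simultaneously optimal against \emph{every} follower type at every context, so that $r(\vz, \pi(\vz), f^{(i)}) = 0$ for all $i$ and all $\vz$. The key observation is that the leader's regret only sees the follower type \emph{through the best-response function} $b_{f^{(i)}}(\vz, \cdot)$ and the leader's own utility $u$; if the game is designed so that the leader's optimal play is insensitive to which of the $K$ best-response maps is in effect, then the combinatorial complexity of $\cH$ is irrelevant.

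Concretely, I would take $K = 2$, $\cZ = \R$ (or any set rich enough to have infinite Littlestone dimension — e.g. thresholds on the line already give $\ldim = \infty$ for a suitable binary class), and let $\cH$ be the class of threshold functions $\vz \mapsto \mathbbm{1}[\vz \ge \theta]$ mapping to $\{f^{(1)}, f^{(2)}\}$, which is well-known to have infinite Littlestone dimension. For the game, I would choose $\cA$, $\cA_f$ and the utilities so that the two follower types are \emph{distinct} in the sense of \Cref{remark:distinct} (so the construction is not vacuous), yet there is a leader strategy $\vx^\star$ whose induced best responses $b_{f^{(1)}}(\vz,\vx^\star)$ and $b_{f^{(2)}}(\vz,\vx^\star)$ both yield the leader her globally maximal utility. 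A clean way: let $\cA_f = \{a, a'\}$, let each follower type's utility make a \emph{different} follower action the best response to $\vx^\star$, but let the leader's utility $u(\vz, \vx^\star, a) = u(\vz, \vx^\star, a') = 1$ — i.e. the leader is indifferent to the follower's action once she plays $\vx^\star$, while at other contexts/strategies the two types genuinely disagree on the leader-optimal strategy (ensuring distinctness). Then $\pi(\vz) = \vx^\star$ for all $\vz$ achieves zero regret regardless of the sequence of types, so the leader "determines her optimal policy without learning."

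The main obstacle is making the example simultaneously satisfy all the formal requirements: (i) $\ldim(\cH) = \infty$ as a multiclass class; (ii) the follower types are distinct per \Cref{remark:distinct}, so the instance is not trivially collapsing two identical types; and (iii) the leader genuinely has a uniformly-optimal strategy. These pull in slightly opposite directions — distinctness forces the types to matter \emph{somewhere}, while (iii) forces them not to matter \emph{on the optimal play}. The resolution is that distinctness only requires disagreement at \emph{some} context for the leader's \emph{optimal} strategy, and we only need a uniformly optimal strategy to \emph{exist}; so I would engineer the utilities so that $\vx^\star$ is optimal against both types at every context, while at some other context a suboptimal-for-both strategy $\vx'$ is the unique best response target that distinguishes them — wait, that is not quite distinctness either. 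Cleaner: make the types agree on the leader-optimal strategy at most contexts but disagree at (say) $\vz_0$, where type $1$'s best response to the leader-optimal $\vx^\star$ still gives leader utility $1$, but type $2$ forces a strictly worse leader utility unless the leader deviates; then the \emph{context-dependent} optimal policy $\pi$ plays $\vx^\star$ everywhere except $\vz_0$ — but there the leader must know the type. To avoid this I instead simply keep $\vx^\star$ optimal against both everywhere and obtain distinctness from disagreement on the best response to a \emph{different}, non-optimal strategy, which is permitted since \Cref{remark:distinct} only asks that \emph{no single strategy is simultaneously optimal} — this is automatically satisfied as long as the utility landscapes of the two types differ enough that their full best-response-composed-with-$u$ profiles are not identical, even if their \emph{argmax} strategies coincide. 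Verifying this compatibility carefully, and double-checking that the Littlestone construction is unaffected by the Stackelberg wrapper, is the part that needs the most care; the rest is a direct computation of $r(\vz,\vx^\star, f^{(i)}) = 0$.
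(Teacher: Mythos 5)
There is a genuine gap, and it is exactly the tension you flagged but then resolved incorrectly. Your final construction posits a single strategy $\vx^\star$ that is simultaneously optimal against both follower types at \emph{every} context, and you argue this is compatible with \Cref{remark:distinct} because the types' ``utility landscapes'' differ even though their argmax strategies coincide. That is a misreading of the definition: distinctness requires some context at which \emph{no single leader strategy} is simultaneously optimal against both types, i.e.\ the optimal-strategy sets must be disjoint somewhere. If $\vx^\star$ is optimal against both types at every context, then every context admits a common optimal strategy, the types are not distinct, and your example violates the standing assumption under which the theorem is stated. Worse, it trivializes the claim: two types that never force different optimal play are, from the leader's standpoint, the same type, so ``no learning is needed'' carries no content. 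Disagreement of best responses to some \emph{non-optimal} strategy does not rescue distinctness as the paper defines it.

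The paper's construction threads the needle differently, and the difference is essential. There the optimal policy is genuinely context-dependent (play the optimal strategy against $f^{(1)}$ when $\vz[2]<1/2$ and against $f^{(2)}$ otherwise), and the types are distinct --- e.g.\ $f^{(1)}$ and $f^{(2)}$ demand different leader strategies at every context. Infinite Littlestone dimension comes from unknown thresholds $\gamma_1,\gamma_2$ in the hypotheses, but these thresholds only move the boundaries of regions where two auxiliary, context-dependent types $f^{(3)},f^{(4)}$ appear, and those types are engineered so that the leader's optimal strategy against them coincides with what she would play anyway based solely on whether $\vz[2]<1/2$. So the hard-to-learn part of the hypothesis is strategically irrelevant while the types themselves remain distinct. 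To repair your proposal you would need this kind of construction: keep at least two types whose optimal counter-strategies genuinely differ somewhere, and arrange that the combinatorially complex parameters of $\cH$ only affect which of several ``interchangeable-for-the-leader'' types appears in each region, not the leader's optimal play. As written, your argument does not establish the theorem in the regime the paper cares about.
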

\begin{proof}[Proof sketch]
The key idea is that, although the follower types can be distinct from the leader's perspective (\Cref{remark:distinct}) and the hypothesis class can encode arbitrarily complex type-prediction problems, the leader’s downstream utility optimization problem may be much simpler. 
Our construction uses two follower types and a hypothesis class, $\cH$, consisting of linear threshold functions. Learning the true threshold can be hard ($\ldim(\cH) = \infty$), but is unnecessary: near the threshold the two follower types induce the same optimal strategy for the leader. 
~\Cref{fig:hard-example} provides an illustration of the Stackelberg game we use in the proof.
\end{proof}

\begin{figure}[t]
    \centering
    \includegraphics[width=\linewidth]{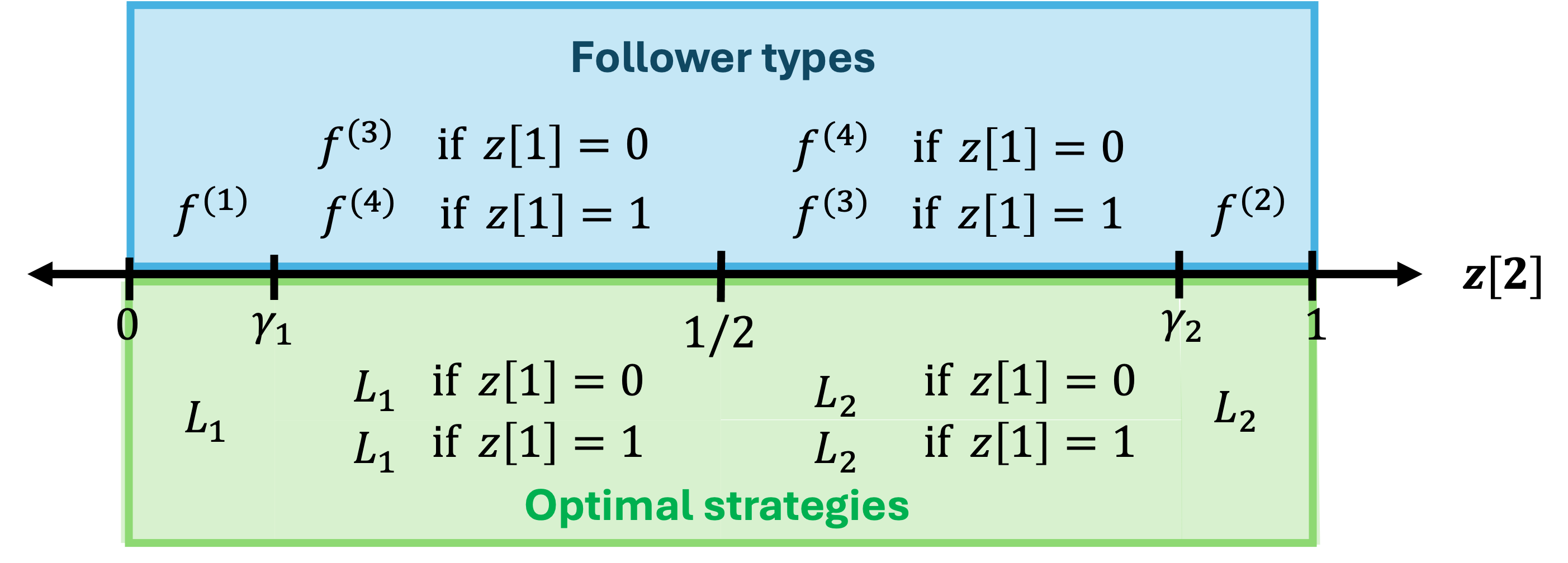}
    \caption{Follower types as a function of the context in the construction of~\Cref{thm:maximallydifferent}. While $\gamma_1$ and $\gamma_2$ vary with each hypothesis, the leader's optimal strategy only depends on whether $\vz[2] < 0.5$.}
    \label{fig:hard-example}
\end{figure}

This result shows that a finite Littlestone dimension is not necessary for online learning in our setting, as it only measures the complexity of predicting the follower types, not how those predictions affect the leader's utilities.\looseness-1

\subsection{Stackelberg Littlestone dimension}

We now define the Stackeleberg Littlestone dimension, which we prove characterizes online learnability in structured Stackeleberg games.

In the definitions that follow we consider trees with nodes labeled by elements in $\cZ$ and edges corresponding to different labels in $[K]$. A depth-$d$ tree, $\mathcal{T}_d = \{\vz_s :s \in S_d\}$, is indexed by sequences $S_d \subseteq \{\varnothing\} \cup \{ s \in [K]^\ell : 1 \leq \ell \leq d \}$, where $\vz_{\varnothing}$ denotes the root node.

\begin{definition}[Stackelberg Littlestone (SL) Tree] \label{def:shattered1}
         A SL tree, $\cT_d$, is a rooted tree, in which each internal node is labeled with an instance in $\cZ$ and each edge corresponds to a label in $[K]$. Each node $\vz_s \in \mathcal{T}_d$ has a weight $\rho_s$, where $\rho_{s} = 0$ if $\vz_s$ is a leaf node and $\rho_s = \inf_{\vx \in \Delta(\cA)}\max_{j \in [K]: sj \in S_d}  \left( r(\vz_s, \vx, f^{(j)}))+\rho_{sj}\right)$ otherwise.
         We say that the tree is \emph{shattered} by a hypothesis class $\cH$ under Stackelberg game $\cG$ if for every root-to-leaf path that traverses nodes $\vz_{\varnothing}, \dots \vz_{s \leq d}$, there exists  $h \in \cH$ such that the label of the edge in the path leaving $\vz_{s
         _{\leq i}}$ is $h(\vz_{s_{\le i}})$ for every $i$.\footnote{$z_{s \leq d}$ refers to a context $z_s$ with depth at most $d$.}\looseness-1
\end{definition}

\begin{example}\label{example2}
    Consider the hypothesis class of permutations
    \begin{align*}
        \mathcal{H}_3 = \left\{ h:\{1,2,3\} \mapsto \{f^{(1)}, f^{(2)}, f^{(3)}\}
 \middle|  h \text{ is a bijection}\right\}
 \end{align*}
 and a Stackelberg game with 3 actions for both players.
The leader's utility matrix is $U_L = \mathbb{I}_{3 \times 3}$ and follower $f^{(i)}$'s utility matrix is $U_{f^{(i)}} = \mathbf{1} e_i^\top$, i.e. she receives utility 1 if and only if she plays action $i$.~\Cref{fig:toytree} shows a corresponding SL tree.\looseness-1
\end{example}

\begin{figure}[h]
    \centering
\includegraphics[height=1.5in]{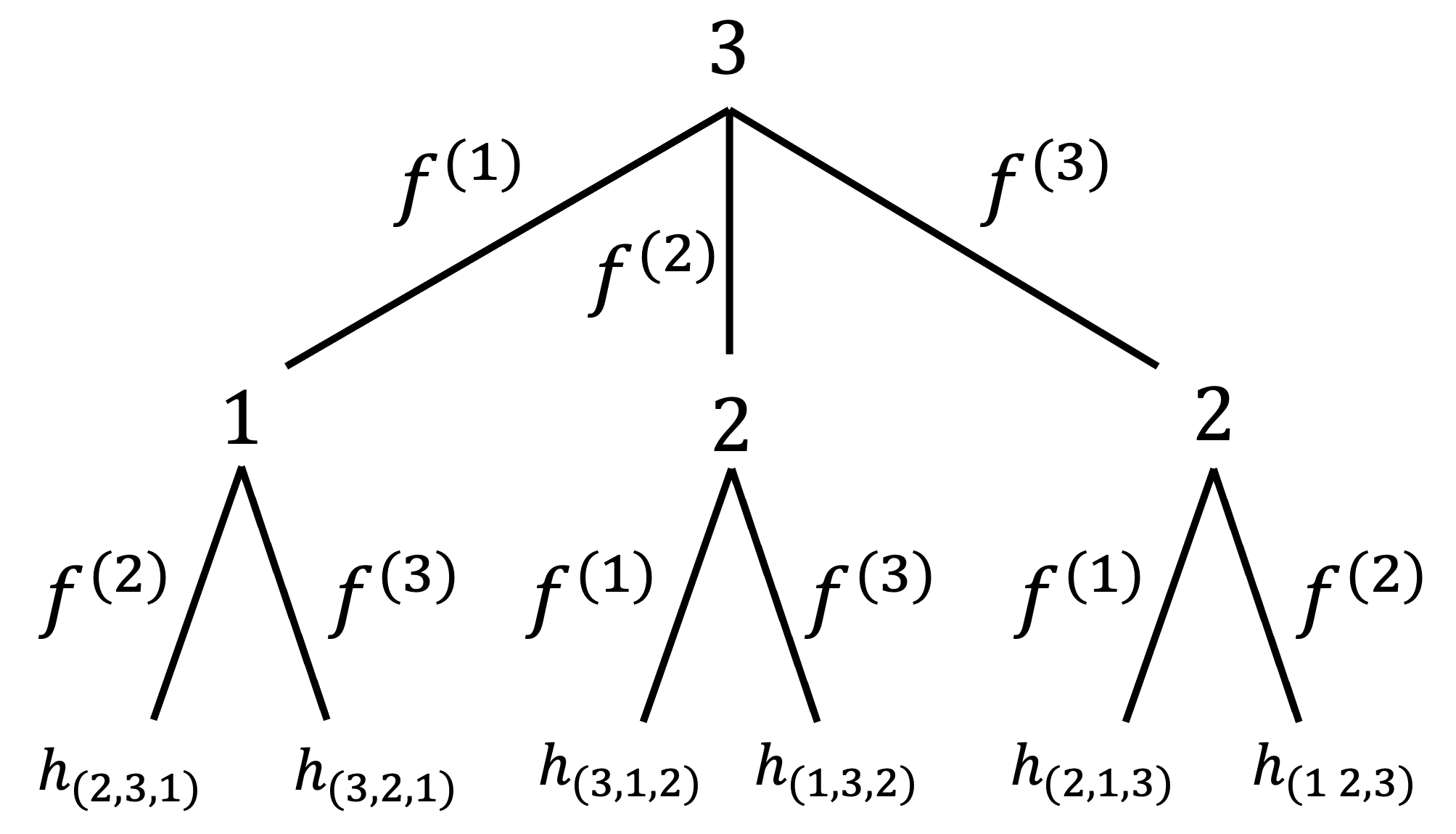}
    \caption{$\mathcal{H}_3$-shattered SL-Tree showing a consistent hypothesis for each root-to-leaf path. Leaf nodes have weight $0$, nodes at depth $1$ have weight $1/2$ and the root node has weight $1/2+2/3$.}
    \label{fig:toytree}
\end{figure}

\begin{definition}[Stackelberg Littlestone (SL) dimension]\label{def:sldim1}
        The SL dimension of a hypothesis class $\mathcal{H}$ for Stackelberg game $\cG$ is defined as $\sldim(\cH) := \sup\{\kappa \geq 0:$ there exists an SL tree shattered by $\cH$ under $\cG$ with root node weight $\kappa\}$.
\end{definition}
Note that taking the supremum is necessary since it is possible that no finite-depth tree achieves the optimal value. If for every $\kappa \geq 0$ there exists an SL tree with root node weight at least $\kappa$, then $\sldim(\cH) = \infty$. 

We now show that no deterministic algorithm can achieve regret lower than the SL dimension.

\begin{restatable}{theorem}{lb}\label{thm:lb}
        For any hypothesis class $\cH$ and Stackelberg game $\cG$, there exists an adversarial sequence of contexts and follower types such that any deterministic algorithm suffers at least $\sldim(\cH)-\epsilon$ regret, for any $\epsilon > 0$.
\end{restatable}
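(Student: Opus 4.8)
The plan is to exhibit the adversary explicitly as an adaptive walk down an SL tree that (nearly) witnesses the SL dimension. Fix $\epsilon>0$. By definition of $\sldim(\cH)$ as a supremum, there exists a finite-depth SL tree $\cT_d$ shattered by $\cH$ under $\cG$ whose root weight $\rho_{\varnothing}$ satisfies $\rho_{\varnothing}\geq \sldim(\cH)-\epsilon$ (if $\sldim(\cH)=\infty$, pick a tree with root weight at least any target value $M$; the argument below then gives regret at least $M$ for every $M$). We will use this tree to drive the interaction: at each round the adversary presents the context labeling the current node, observes the deterministic algorithm's mixed strategy, and then selects the follower type (equivalently, the outgoing edge) that is the argmax in the definition of that node's weight. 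The contexts so produced form a root-to-leaf path, so by the shattering property there is a single $h^*\in\cH$ consistent with every edge chosen, certifying that the sequence is realizable and hence a legitimate instance of our online problem.

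The key step is a backward induction on the tree establishing that, against any fixed deterministic algorithm, the accumulated regret from the subtree rooted at node $\vz_s$ is at least $\rho_s$. The base case is the leaves, where $\rho_s=0$. For the inductive step at an internal node $\vz_s$: the algorithm, having seen the path so far, commits to some $\hat\vx_s\in\Delta(\cA)$; the adversary picks $j^\star\in\arg\max_{j:\,sj\in S_d}\bigl(r(\vz_s,\hat\vx_s,f^{(j)})+\rho_{sj}\bigr)$ and plays follower type $f^{(j^\star)}$. The regret incurred at this round is exactly $r(\vz_s,\hat\vx_s,f^{(j^\star)})$ (by the definition of $r$ and of the optimal policy benchmark $\pi_{h^*}$), and by the inductive hypothesis the remaining regret down the subtree rooted at $\vz_{sj^\star}$ is at least $\rho_{sj^\star}$. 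Hence the total is at least $r(\vz_s,\hat\vx_s,f^{(j^\star)})+\rho_{sj^\star}=\max_{j}\bigl(r(\vz_s,\hat\vx_s,f^{(j)})+\rho_{sj}\bigr)\geq \inf_{\vx}\max_j\bigl(r(\vz_s,\vx,f^{(j)})+\rho_{sj}\bigr)=\rho_s$. Applying this at the root gives total regret at least $\rho_{\varnothing}\geq \sldim(\cH)-\epsilon$.

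A couple of points need care. First, I should confirm that the per-round loss the leader actually suffers equals $r(\vz_s,\hat\vx_s,f^{(j^\star)})$ as defined in the setting — this is immediate since $r$ is defined as the gap between the best achievable utility at $(\vz_s,f^{(j^\star)})$ and the utility of $\hat\vx_s$, which is precisely the regret term $r(\vz_t,\vx_t,f_t)$ in the definition of $R(T)$. Second, I must make sure the constructed sequence is realizable in the sense required: the shattering definition guarantees exactly that the chosen edge labels along the realized root-to-leaf path are simultaneously consistent with one $h^*\in\cH$, so setting $f_t=f^{(h^*(\vz_t))}$ along the path is valid; since the adversary is adaptive but the algorithm is deterministic, the realized path is well-defined. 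Third, a subtlety flagged in \Cref{remark:deltaclose}: the leader may only be able to approximately optimize against a fixed follower type due to follower tie-breaking. This does not affect the lower bound — it only makes the leader weaker — but to keep the argument clean I will note that the infimum over $\Delta(\cA)$ in the weight recursion is the relevant benchmark and absorb any such gap into the arbitrarily small $\epsilon$.

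The main obstacle I anticipate is purely bookkeeping rather than conceptual: carefully handling the supremum in \Cref{def:sldim1} when it is not attained by any finite tree (so that we genuinely only get $\sldim(\cH)-\epsilon$, never equality, and the $\epsilon>0$ in the statement is essential), and being precise that ``deterministic algorithm'' means the adversary can simulate the algorithm's response before committing the follower type — which is what legitimizes the adaptive descent. Everything else is a clean finite induction on a finite tree.
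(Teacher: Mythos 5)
Your proposal is correct, and it follows the same overall strategy as the paper: pick a finite shattered SL tree whose root weight nearly witnesses $\sldim(\cH)$, let the adversary descend it adaptively against the deterministic learner, and show by induction that the regret forced from the subtree at a node is at least that node's weight $\rho_s$. The difference is in the inductive step. The paper fixes the (arg)inf strategy $\vx_*$ and the corresponding argmax type $i_*$ at the current node and then argues by case analysis on whether the learner plays $\vx_*$ or some other $\tilde{\vx}$ with smaller immediate regret, deriving in the latter case that the subtree weight $\rho_{\tilde{i}}$ must exceed $\rho_{i_*}$ by the saved amount $\ell$, so the total is still $\rho_{\varnothing}$. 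You instead respond directly to whatever strategy $\hat{\vx}_s$ the learner actually plays by choosing $j^\star \in \arg\max_{j}\bigl(r(\vz_s,\hat{\vx}_s,f^{(j)})+\rho_{sj}\bigr)$ and invoking $\max_j(\cdot)\geq \inf_{\vx}\max_j(\cdot)=\rho_s$. This is cleaner: it collapses the paper's two cases into one line, and it never requires the infimum over $\Delta(\cA)$ (or the paper's ``arginf'') to be attained, which your argument sidesteps entirely while the paper's write-up implicitly assumes attainment. Both proofs handle realizability the same way (the shattering condition of \Cref{def:shattered1} supplies an $h^*$ consistent with the realized root-to-leaf path), and both push non-attainment of the supremum in \Cref{def:sldim1} into the $\epsilon$ slack; your explicit treatment of the $\sldim(\cH)=\infty$ case and of the adversary simulating the deterministic learner matches what the paper leaves implicit.
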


\begin{proof}[Proof sketch]
    The key idea in the proof is that the weight assigned to each node of an SL tree measures how much regret the adversary can still guarantee from that point onward. 
    At the root, this weight captures the minimum regret the learner must suffer no matter which strategy she plays. 
    If the learner chooses the strategy that minimizes immediate regret, the adversary selects the follower type that leads into a subtree with large remaining weight, ensuring future regret. 
    If the learner deviates to reduce future regret, the adversary instead extracts more regret immediately.

    By repeating this reasoning along a root-to-leaf path, the adversary can force the learner to incur regret equal to the root weight of the tree. 
    Since the SL dimension is defined as the supremum root weight over all such shattered trees, no deterministic algorithm can achieve regret smaller than the SL dimension (up to an arbitrarily small additive term).
\end{proof}

We now present an instance-optimal algorithm that achieves regret equal to the SL dimension. We call \Cref{alg:newssoa} the Stackelberg Standard Optimal Algorithm (SSOA), as it shares some similarities with the Standard Optimal Algorithm (SOA) from online multiclass prediction \cite{daniely2014:optimal}. In each round, SSOA tracks the subset of hypotheses that are consistent with the history so far. The algorithm then plays the mixed strategy that minimizes the worst-case instantaneous regret, plus the SL dimension of the induced hypothesis class. The latter captures the difficulty of the learning task that remains after the current round ends.\looseness-1 

\begin{algorithm}
    \caption{SSOA}\label{alg:newssoa}
    \begin{algorithmic}[1]
    \STATE \textbf{Input:} Stackelberg game $\cG$, hypothesis class $\cH$
    \STATE Initialize version space $V_1 = \cH$
    \FOR{$t \in [T]$}
        \STATE Observe context $\vz_t$
        \STATE Compute optimal utility for each type $i \in V_t(\vz_t)$ $u_{*}^{(i)}=\sup_{\vx \in \Delta(\cA)}u(\vz_t, \vx, b_{f^{(i)}}(\vz_t, \vx)) \; $
        \STATE Play $\vx_t \in \arg\inf_{\vx \in \Delta(\cA)}\max_{j \in V_t(\vz_t)} \left( u_{*}^{(j, \vz_t)} -\right.$ $\left. u(\vz_t, \vx, b_{f^{(j)}}(\vz_t, \vx)) +\sldim \left( V_{t}^{(\vz_t \to j)} \right) \right)$ \label{step:opt}
        \STATE Observe follower type $f_t$ \STATE Update version space $V_{t+1} = V_t^{(\vz_t \to f_t)}$ \label{step:version space}
    \ENDFOR
    
\end{algorithmic}
\end{algorithm}

\begin{restatable}{theorem}{thmreg}\label{thm:upper}
    For any Stackelberg game $\cG$ and every hypothesis class $\cH$, the regret of \Cref{alg:newssoa} is at most $\sldim(\cH)$.\looseness-1
\end{restatable}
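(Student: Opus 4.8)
The plan is to establish a potential-function argument in which the potential at round $t$ is the SL dimension of the current version space, $\sldim(V_t)$. I will show two things: first, that this potential is nonincreasing, i.e. $\sldim(V_{t+1}) \le \sldim(V_t)$ for all $t$; and second, that in each round the instantaneous regret incurred by SSOA is at most the drop in potential, $r(\vz_t, \vx_t, f_t) \le \sldim(V_t) - \sldim(V_{t+1})$. Summing this telescoping inequality over $t \in [T]$ and using $\sldim(V_{T+1}) \ge 0$ and $\sldim(V_1) = \sldim(\cH)$ gives the bound $R(T) \le \sldim(\cH)$.

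\textbf{Key steps.} First I would record monotonicity: since $V_{t+1} = V_t^{(\vz_t \to f_t)} \subseteq V_t$, any SL tree shattered by $V_{t+1}$ is also shattered by $V_t$, so $\sldim(V_{t+1}) \le \sldim(V_t)$; moreover $f_t \in V_t(\vz_t)$ by realizability. Next comes the main step, bounding the per-round regret. Fix the context $\vz_t$ and abbreviate $d_j := \sldim(V_t^{(\vz_t \to j)})$ for each type $j \in V_t(\vz_t)$. The quantity SSOA minimizes in Step~\ref{step:opt} is exactly $g(\vx) := \max_{j \in V_t(\vz_t)} \big( u_*^{(j)} - u(\vz_t, \vx, b_{f^{(j)}}(\vz_t, \vx)) + d_j \big) = \max_{j \in V_t(\vz_t)} \big( r(\vz_t, \vx, f^{(j)}) + d_j \big)$, and since the leader plays $\vx_t \in \arg\inf_\vx g(\vx)$, we have $r(\vz_t, \vx_t, f_t) + d_{f_t} \le \inf_\vx g(\vx)$. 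So it suffices to prove $\inf_\vx g(\vx) \le \sldim(V_t)$, i.e. that $\sldim(V_t) \ge \inf_{\vx \in \Delta(\cA)} \max_{j \in V_t(\vz_t)} \big( r(\vz_t, \vx, f^{(j)}) + \sldim(V_t^{(\vz_t\to j)}) \big)$. This I would prove by a "one-step tree extension" construction: given, for each $j \in V_t(\vz_t)$ and any $\eta > 0$, an SL tree $\cT^{(j)}$ shattered by $V_t^{(\vz_t \to j)}$ with root weight at least $d_j - \eta$ (or an arbitrarily large weight if $d_j = \infty$), build a new SL tree with root labeled $\vz_t$, whose $j$-edge leads to $\cT^{(j)}$ for each such $j$. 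This new tree is shattered by $V_t$ because any root-to-leaf path picks some $j$ at the root and then follows a path in $\cT^{(j)}$, and the witnessing hypothesis from $V_t^{(\vz_t\to j)}$ lies in $V_t$ and satisfies $h(\vz_t) = j$. By Definition~\ref{def:shattered1}, its root weight is $\inf_{\vx} \max_{j} \big( r(\vz_t, \vx, f^{(j)}) + \rho^{(j)}_{\mathrm{root}} \big) \ge \inf_\vx \max_j \big( r(\vz_t, \vx, f^{(j)}) + d_j - \eta \big)$, and taking $\eta \to 0$ (and handling the infinite case separately, where the root weight is also infinite, making the bound trivial) yields the desired inequality $\sldim(V_t) \ge \inf_\vx g(\vx)$.

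\textbf{Main obstacle.} The delicate point is the handling of suprema/infima: the SL dimension is defined as a supremum of root weights over possibly infinitely many shattered trees with no maximizer guaranteed, so I must carry the $\eta$-slack through the tree-extension argument and only pass to the limit at the end; similarly the $\inf_{\vx \in \Delta(\cA)}$ inside the node-weight definition need not be attained, but this is harmless since I only ever need inequalities in the direction provided by taking infima. I also need to check the degenerate case where $V_t(\vz_t)$ is a singleton $\{f_t\}$: then $d_{f_t} = \sldim(V_t^{(\vz_t\to f_t)}) = \sldim(V_{t+1})$, SSOA can drive $r(\vz_t, \vx_t, f_t)$ to $0$ (the leader knows the type, cf. Remark~\ref{remark:deltaclose} for the $\delta$-approximation subtlety), and the per-round inequality holds with equality in the limit. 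Finally, because regret and SL-dimension values may only be approached and not attained, the cleanest statement of the per-round bound is $r(\vz_t,\vx_t,f_t) \le \sldim(V_t) - \sldim(V_{t+1})$ up to an arbitrarily small additive constant that can be absorbed, or alternatively one argues the telescoped bound directly with a single global $\epsilon$ and then lets $\epsilon \to 0$; I would adopt whichever bookkeeping is least cluttered.
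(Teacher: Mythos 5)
Your proposal is correct and follows essentially the same argument as the paper: a per-round potential bound $r(\vz_t,\vx_t,f_t) \le \sldim(V_t) - \sldim(V_{t+1})$ (using $f_t \in V_t(\vz_t)$ and the fact that SSOA plays the minimizer of $\max_j\bigl(r(\vz_t,\vx,f^{(j)})+\sldim(V_t^{(\vz_t\to j)})\bigr)$), telescoped over $t$. In fact, your one-step tree-extension construction with the $\eta$-slack explicitly justifies the inequality $\inf_{\vx}\max_{j}\bigl(r(\vz_t,\vx,f^{(j)})+\sldim(V_t^{(\vz_t\to j)})\bigr) \le \sldim(V_t)$, which the paper's proof invokes as the final (unproved) equality $\sup_{\vz}\inf_{\vx}\max_{i}[\cdots]=\sldim(V_t)$, so your write-up is, if anything, more complete on that point.
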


\begin{proof}[Proof sketch]
    Our proof follows a high-level idea similar to the analysis of SOA in online prediction: With each mistake, the learner’s task becomes easier, as the Littlestone dimension decreases. Similarly, we show that when the learner incurs instantaneous regret, the SL dimension of the induced hypothesis class decreases by at least this amount.\looseness-1
\end{proof}

\subsection{Relationship with the Littlestone dimension}

We have so far shown that the SL dimension characterizes learnability in structured Stackelberg games. We now examine how the it relates to online multiclass prediction. \looseness-1

\begin{restatable}{lemma}{relateldimtosldim}\label{lemma:relateldimtosldim}
    For any hypothesis class $\cH$ and any Stackelberg game $\cG$, $\sldim(\cH) \leq \ldim(\cH)$.
\end{restatable}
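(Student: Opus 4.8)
The plan is to show that any Littlestone tree of depth $d$ shattered by $\cH$ (in the sense of \Cref{def:lit_tree}) can be converted into an SL tree shattered by $\cH$ under $\cG$ whose root weight is at least $d$ times the smallest gap the Stackelberg game can produce between a ``correct'' and an ``incorrect'' strategy. The subtlety is that the SL dimension weights edges by $r(\vz,\vx,f^{(j)})$, which could in principle be tiny even when two follower types are distinct, so a naive identification would only give $\sldim \geq c \cdot \ldim$ for some instance-dependent $c$, not $\sldim \leq \ldim$. So in fact I would argue in the reverse direction: I would take an SL tree shattered by $\cH$ under $\cG$ with root weight $\kappa$ and show that its underlying rooted tree structure (forgetting weights) is a Littlestone tree shattered by $\cH$, hence has depth at most $\ldim(\cH)$, and then bound $\kappa$ by the depth.

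First I would recall that in an SL tree, the shattering condition on root-to-leaf paths is \emph{syntactically identical} to the shattering condition in \Cref{def:lit_tree}: a hypothesis consistent with every edge label along the path must exist. So a shattered SL tree, viewed purely as a labeled rooted tree, is a shattered multiclass Littlestone tree. If the SL tree has depth $d$, then $d \leq \ldim(\cH)$ by definition of the Littlestone dimension. Next I would bound the root weight of the SL tree by its depth: from the recursion $\rho_s = \inf_{\vx}\max_{j:\, sj \in S_d}\big(r(\vz_s,\vx,f^{(j)}) + \rho_{sj}\big)$, and the fact that $r(\vz,\vx,f^{(j)}) \le 1$ always (the leader's utility lies in $[0,1]$), a straightforward induction on the height of the subtree rooted at $s$ gives $\rho_s \le (\text{height of subtree at } s)$. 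In particular $\rho_{\varnothing} \le d \le \ldim(\cH)$. Taking the supremum over all shattered SL trees yields $\sldim(\cH) \le \ldim(\cH)$; and if $\sldim(\cH) = \infty$, then there are shattered SL trees of arbitrarily large root weight, hence (by the same $\rho_{\varnothing}\le d$ bound) of arbitrarily large depth, forcing $\ldim(\cH) = \infty$ as well, so the inequality still holds.

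The one place that needs a little care is the handling of the supremum and the $\infty$ case, and making sure the induction $\rho_s \le \text{height}$ is set up correctly at the leaves (where $\rho_s = 0$ and height $= 0$) and propagates, using $r \le 1$ and the fact that $\max$ of sums is at most the sum of the maxes of each term bounded separately. I do not expect a genuine obstacle here — the key conceptual point is simply that the SL shattering condition is the Littlestone shattering condition plus extra weight bookkeeping, and the weights only ever \emph{shrink} the quantity relative to counting depth. I would write the induction as: for every $s \in S_d$, $\rho_s \le d - \ell(s)$ where $\ell(s)$ is the length of $s$; base case leaves, inductive step using $r \le 1$. This immediately gives $\rho_{\varnothing} \le d = \operatorname{depth}(\cT_d) \le \ldim(\cH)$, completing the proof.
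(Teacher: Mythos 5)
There is a genuine gap at the step ``if the SL tree has depth $d$, then $d \le \ldim(\cH)$ by definition of the Littlestone dimension.'' A shattered SL tree (\Cref{def:shattered1}) is \emph{not} required to branch: a node $\vz_s$ may have a single child $sj$, and the shattering condition only asks for a consistent hypothesis along each path, so it never forces two realizable labels at a node. Consequently the depth of a shattered SL tree can exceed $\ldim(\cH)$ by an arbitrary amount. Concretely, take a singleton class $\cH=\{h\}$, so $\ldim(\cH)=0$, and build a chain of depth $100$ in which each node $\vz_s$ has one outgoing edge labeled $h(\vz_s)$: this tree is shattered, so your claimed inequality $\mathrm{depth}\le\ldim(\cH)$ fails. (Recall also that $\ldim$ is defined through \emph{full binary} shattered trees, so even a branching shattered tree of depth $d$ is not by itself a witness that $\ldim(\cH)\ge d$.) Your induction $\rho_s \le d-\ell(s)$, i.e.\ root weight at most depth, is correct but inherently too lossy: the lemma is true precisely because non-branching stretches add depth without adding weight --- since $\inf_{\vx\in\Delta(\cA)} r(\vz,\vx,f^{(j)})=0$, a single-child node satisfies $\rho_s=\rho_{sj}$ --- so any proof must compare the \emph{weight} accumulated along a path to $\ldim(\cH)$, not the raw number of levels.

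The paper's argument is organized around exactly this point, by contradiction through the edge-weighted reformulation (\Cref{def:shattered2,def:sldim2}, shown equivalent for finite trees in \Cref{thm:equal-defs}): if $\sldim(\cH)>d:=\ldim(\cH)$, there is a shattered edge-weighted SL tree in which \emph{every} root-to-leaf path has cumulative weight strictly greater than $d$; since each edge weight is at most $1$, every path must contain more than $d$ weight-carrying edges, which forces a shattered tree deep enough to contradict $\ldim(\cH)=d$ (positive edge weights cannot sit on non-branching edges, again because $\inf_{\vx} r(\vz,\vx,f^{(j)})=0$). To repair your direct, node-weight version you would need an induction keyed to the hypothesis class rather than to depth, e.g.\ show by induction on $\ldim$ that any SL tree shattered by $\cH$ has root weight at most $\ldim(\cH)$, using that at any node at most one child label $j$ can leave $\ldim\bigl(\cH^{(\vz\to j)}\bigr)$ undiminished, and playing (near-)optimally against that type at the node. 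As written, however, the proposal's central inequality is false and the proof does not go through.
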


To make the relationship between the two dimensions more concrete, consider the following example, which is a generalization of Examples \ref{example1} and \ref{example2} to $n$ dimensions.
\begin{example}
 Consider a setting with $n$ contexts, $n$ follower types, and the class of permutations $\cH_n$. Note that $\ldim(\cH_n) = n-1$, since an adversary can force a mistake in every round that she can query. A tree of depth $n$ cannot be shattered because once the labels of the first $n-1$ contexts have been observed, the label of the last context is fixed.\looseness-1 
 
 Now consider a Stackelberg game where the leader's utility matrix is  $U = \mathbf{I}_{n \times n}$ and follower type $i$'s utility matrix is $U_i = \mathbf{1}_n e_i^\top$.
 We will show that $ \sldim(\cH_n) = n - H_n$, where $H_n$ is the $n$-th harmonic number. 
 
 Consider the SL tree that witnesses the SL dimension. Due to symmetry, each node at depth $d$ will have $n-d$ outgoing edges, one for each follower type not encountered in the path so far. Each node at depth $n-2$ will have weight $1/2$, 
 since the optimal utility of the leader against any follower type is $1$, the optimal utility against any follower chosen adversarially from a set of two types is $1/2$, and any deeper nodes must have only one outgoing edge (and therefore  weight $0$). More generally, for a node $s$ at depth $d$ we have 
    \[\rho_s = \inf_{\vx \in \Delta(\cA)}\max_{j \in \{j_1, \dots, j_{n-d}\}}  \biggl[ r(\vz, \vx, f^{(j)}) + \sum_{j=2}^{n-d-1} 1 - \frac{1}{j} \biggr], \]
    which simplifies to $n-d - \sum_{j=1}^{n-d}\frac{1}{j}$.
\end{example}

Note that while any deterministic learner will make at least $\ldim(\cH)$ mispredictions when learning $\cH$, mispredicting the follower's type need not lead to suboptimal utility for the leader. Therefore, \Cref{lemma:relateldimtosldim} does not immediately imply that the optimal multiclass prediction algorithm (SOA) would incur suboptimal regret in our setting.
Indeed, in~\Cref{thm:maximallydifferent} we constructed a game instance where the leader can mispredict the follower's type at every timestep, yet still achieve optimal utility. 
We conclude this section by showing that there are indeed game instances where SOA will suffer larger regret than SSOA. 
    
\begin{restatable}{theorem}{suboptimalsoa}\label{thm:suboptimal soa}
    There exists a hypothesis class $\cH$ and Stackelberg game $\cG$ such that running SOA on $\cH$ results in suboptimal utility for the leader. 
\end{restatable}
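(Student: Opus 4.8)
The plan is to construct a concrete separating instance, reusing the machinery of the running example (the permutation class $\cH_n$ with leader utility $U = \mathbf{I}_{n\times n}$ and follower-type utilities $U_i = \mathbf{1}_n e_i^\top$), but small enough to exhibit the phenomenon cleanly — $n=3$ should suffice. First I would recall what SOA does in our reduction: it treats the type-prediction problem as a multiclass problem and, upon each context $\vz_t$, predicts the type $\hat{f}_t$ that minimizes the Littlestone dimension of the induced version space, then plays the leader strategy $\pi$ that is optimal against $\hat{f}_t$ at $\vz_t$. The key point is that SOA commits to a single predicted type and plays a \emph{pure} best response to it, whereas SSOA is allowed to hedge — playing a mixed strategy that trades off instantaneous regret against all types still in the version space versus the remaining SL dimension.

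The core of the argument is to show that in this instance there is a round where (i) the version space at $\vz_t$ still contains at least two types, (ii) whichever single type SOA predicts, the adversary can reveal the \emph{other} type, and the pure best response to the wrong prediction gives the leader utility $0$ (since $U$ is the identity, playing the pure action matching type $\hat f_t$ earns the leader $1$ only if the true type is $\hat f_t$, and $0$ otherwise), so SOA incurs instantaneous regret $1$ on that round; yet (iii) the total SL dimension along the whole adversarial path is strictly less than the number of such ``forced-mistake'' rounds times $1$. Concretely, with $\cH_3$ the leader can always play $\vx = (1/2,1/2,0)$-type mixtures over the two surviving candidate actions and guarantee utility $1/2$ against either surviving type, so the SSOA-style bound gives total regret $\le \sldim(\cH_3) = 3 - H_3 = 3 - 11/6 = 7/6$, while I would arrange the adversary against SOA to force two rounds of instantaneous regret $1$ each (a mispredict on the depth-0 node and again on a depth-1 node), for total regret $2 > 7/6$. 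The adversary's strategy for SOA is simply: at each node, whatever type SOA names, declare the true type to be a different one still consistent with $\cH_n$; this is always possible until only one candidate remains, which by the structure of the permutation class happens only at the last context.

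The main obstacle I anticipate is pinning down SOA's tie-breaking behavior precisely enough that the argument is adversary-proof: the multiclass SOA picks the prediction minimizing the induced Littlestone dimension, and on $\cH_3$ at the root \emph{every} prediction leaves Littlestone dimension $1$, so ties must be broken by some fixed rule — and the adversary must be shown to win against \emph{any} such rule. This is handled by the observation that the adversary's counter-move (``name a different surviving type'') is available regardless of which type SOA selects, so no tie-breaking rule rescues SOA; I would state this as the crux lemma and verify it by a short case check on the $3\times 3$ utility matrices, confirming that a pure best response to the wrong type yields leader utility exactly $0$ while the optimum is $1$, and that the two forced mispredictions can be scheduled on distinct contexts. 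Finally I would contrast this with $\sldim(\cH_3)$ computed in the running example to conclude SOA's regret ($=2$ on this sequence) strictly exceeds SSOA's worst-case regret ($\le 7/6$), giving the claimed separation.
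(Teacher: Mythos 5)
Your construction is correct, but it is genuinely different from the paper's. You take the running example as-is (the permutation class $\cH_3$ with $U_L=\mathbb{I}_{3\times 3}$ and $U_{f^{(i)}}=\mathbf{1}e_i^\top$): there every misprediction costs instantaneous regret exactly $1$ (the unique optimal leader strategy against type $i$ is the pure action $e_i$, and the follower's best response is type-independent), the induced Littlestone dimensions are tied at every round, and an adaptive adversary that simply names a surviving type different from the deterministic prediction forces regret $2$ against \emph{any} tie-breaking rule, while $\sldim(\cH_3)=3-H_3=7/6$ upper-bounds the regret of the optimal algorithm on every sequence (\Cref{thm:upper}); that yields a valid separation of the kind the theorem asks for. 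The paper instead builds a four-context instance with the permutation class augmented by two extra hypotheses $h,h'$ and context-dependent follower utilities, engineered so that $\sldim(\cH)=3/4$ and, crucially, so that SOA's \emph{first} misprediction is not an artifact of tie-breaking: the Littlestone dimension of the version space is concentrated (via $h,h'$) on the strategically irrelevant label $f^{(3)}$, so SOA's dimension-maximizing rule itself forces the wrong prediction. What your route buys is simplicity and reuse of the running example; what the paper's buys is a cleaner illustration of the conceptual point (SOA fails because its objective tracks combinatorial complexity rather than leader utility, not merely because an adversary exploits ties) and a larger gap ($2$ vs.\ $3/4$). Two small points to fix if you write this up: SOA predicts the label that \emph{maximizes} (not minimizes) the Littlestone dimension of the induced version space --- immaterial in your instance only because all induced dimensions are equal at each round, but worth stating correctly together with the adaptive-adversary argument covering all tie-breaks; and at the root there are three surviving types, so the hedging strategy is uniform over three actions (regret $2/3$), not a $(1/2,1/2,0)$ mixture --- the total $2/3+1/2=7/6$ is what matches $\sldim(\cH_3)$.
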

\begin{proof}[Proof sketch]
    The game construction exploits the mismatch between the label prediction objective of SOA and optimizing for leader utility. 
    Specifically, the game is designed such that SOA focuses on parts of the hypothesis class that are combinatorially complex (large Littlestone dimension) but strategically irrelevant. 
    As a result, SOA unnecessarily suffers utility loss over multiple rounds. 
\end{proof}
\section{Distributional learning}\label{sec:distributional_learning}

We now shift our focus from online learning to distributional learning.
In this setting, contexts are drawn i.i.d. from an unknown distribution and as in the online setting, we assume that there exists a function $h^* \in \cH$ that correctly predicts each follower type given the context. The leader's goal is to choose a strategy that will maximize her expected utility on a fresh sample drawn from the same distribution. 
A learning algorithm is a mapping $A: (\cZ \times [K])^* \to \Delta(\cA)^\cZ$. 

We consider the following notion of sample complexity, adapted to our Stackelberg setting.

\begin{definition}[PAC cut-off sample complexity]\label{def:cut-off-pac}
    The PAC cut-off sample complexity $ m(\cH, \cG; \epsilon, \delta, \gamma)$ of function class $\cH$ w.r.t. Stackelberg game $\cG$ is $m(\cH, \cG; \epsilon, \delta, \gamma) = \inf_{A} m_{A}(\cH, \cG; \epsilon, \delta, \gamma)$,
    where $m_{A}(\cH, \cG; \epsilon, \delta, \gamma)$ is the smallest integer such that  for any $m \geq m_{A}(\cH, \cG; \epsilon, \delta, \gamma)$, every distribution $\cD$ on $\cZ$, and any target hypothesis $h^* \in \cH$, the expected cut-off loss of algorithm $A$, $\P_{\vz \sim \cD}[r(\vz, A(\vz; S), f^{(h^*(\vz))}) > \gamma]$, 
    is at most $\epsilon$ with probability at least $1-\delta$.
\end{definition}

While we present our results with respect to the PAC cut-off sample complexity, any bound with respect to this benchmark immediately implies a bound on the familiar PAC sample complexity \cite{valiant1984theory} as well. \footnote{See Lemma 1 in \citet{Attias23:regression}.}\looseness-1

\subsection{Sample complexity lower bound}

We now introduce the $\gamma$-valued Stackelberg-Natarajan dimension (henceforth SN dimension), which we show characterizes the sample complexity lower bound for distributional learning in structured Stackelberg games. Indeed, we prove that the finiteness of this measure is a necessary condition for learnability in our setting. 
The SN dimension is inspired by the Natarajan dimension from multiclass prediction \cite{natarajan1989learning, daniely2014:optimal} and the $\gamma$-Natarajan dimension from realizable regression \cite{Attias23:regression}. 

We start by defining what it means for a set of contexts to be $\gamma$-SN-shattered. 
Intuitively, a set is $\gamma$-SN-shattered if the hypothesis class $\cH$ can fit complex patterns of follower types on the set. Furthermore, these follower types must be sufficiently different: no leader strategy can incur loss less than $\gamma$ if the type is chosen adversarially among them.

\begin{definition}[$\gamma$-SN-shattered set]\label{def:gamma-sn-shattered}
     A set $\{\vz_1, \dots, \vz_n\} \subseteq \cZ$ is $\gamma$-SN-shattered by hypothesis class $\cH$ with respect to Stackelberg game $\cG$ if there exist functions $g_0, g_1:\cZ \to [K]$ such that
    \begin{enumerate}[topsep=0pt,itemsep=0pt,parsep=0pt,partopsep=0pt]
        \item $\inf_{\vx \in \Delta(\cA)}\max_{j \in  \{g_0(\vz_i), g_1(\vz_i)\}} r(\vz_i, \vx,  f^{(j)} ) > \gamma$  for every $\vz_i \in S$, and\looseness-1 
        \item for all $b \in \{0, 1\}^n$, there exists an $h_b \in \cH$ such that 
        \begin{equation*}
            h_b(\vz_i) = \begin{cases}
           g_0(\vz_i) \;\; \text{if} \;\;  b_i = 0 \\
           g_1(\vz_i) \;\; \text{if} \;\;  b_i = 1.
       \end{cases}
    \end{equation*}
    \end{enumerate}
\end{definition}

\begin{definition}[$\gamma$-valued SN dimension]\label{def:sn_dim}
    The $\gamma$-valued SN dimension, $\sndim(\cH)$, is the cardinality of the largest set that is $\gamma$-SN-shattered by $\cH$ with respect to $\cG$.\looseness-1
\end{definition}

We emphasize that that the two functions, $g_0$ and $g_1$ in \Cref{def:gamma-sn-shattered} must be $\gamma$-far from each other in a minmax sense: at each context the leader cannot choose a mixed strategy that performs well against both $g_0(\vz_0)$ and $g_1(\vz_0)$. This enables a distribution-independent lower bound that holds for every learning algorithm. We prove our lower bound in \Cref{thm:dist_lb} by adapting the argument of \citet{Attias23:regression} to our setting.

\begin{restatable}{theorem}{distlb}\label{thm:dist_lb}
    Let $A$ be any learning algorithm and $\epsilon, \delta, \gamma \in (0, 1)$ such that $\delta < \epsilon$, then
    \begin{align*}
        m_{A}(\cH, \cG; \epsilon, \delta, \gamma) \geq \Omega \left( \frac{\sndim(\cH) + \log(1/\delta)}{\epsilon} \right).
    \end{align*}
\end{restatable}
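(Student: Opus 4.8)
The plan is to adapt the classical PAC lower-bound construction (of the kind used for the Natarajan dimension in multiclass learning, and by \citet{Attias23:regression} for the $\gamma$-Natarajan dimension) to the Stackelberg cut-off loss. Fix a maximal $\gamma$-SN-shattered set $\{\vz_1,\dots,\vz_n\}$ with $n = \sndim(\cH)$, witnessed by the two functions $g_0,g_1$ as in \Cref{def:gamma-sn-shattered}. The first step is to use these to build a hard family of target hypotheses and distributions: for each bit string $b \in \{0,1\}^n$ let $h_b \in \cH$ be the hypothesis guaranteed by property 2 of the shattering definition, so that $h_b(\vz_i) \in \{g_0(\vz_i), g_1(\vz_i)\}$ selects one of the two ``$\gamma$-separated'' follower types at each shattered point. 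Property 1 guarantees that whichever of the two types is selected, the learner who does not know $b_i$ must suffer cut-off loss (i.e. regret exceeding $\gamma$) on $\vz_i$ with probability essentially $1/2$ unless it has seen $\vz_i$ labeled in the sample: on context $\vz_i$, since $\inf_{\vx}\max_{j \in \{g_0(\vz_i),g_1(\vz_i)\}} r(\vz_i,\vx,f^{(j)}) > \gamma$, no single strategy can keep the regret $\le \gamma$ against both candidate types, so a strategy chosen without knowing $b_i$ errs against at least one of them.

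The second step is the standard two-regime averaging argument. Following the typical recipe, I would handle the $\sndim(\cH)/\epsilon$ term and the $\log(1/\delta)/\epsilon$ term with two separate distributions and then combine. For the dimension term: put mass $1 - 8\epsilon$ on a fixed ``easy'' context (or on $\vz_1$ with a forced label) and spread the remaining $8\epsilon$ mass uniformly over $\vz_1,\dots,\vz_n$; draw $b$ uniformly at random. A sample of size $m = o(n/\epsilon)$ sees only a constant fraction of the $n$ shattered points, so by a counting/averaging argument (Fubini over the randomness of $b$ and the sample) there is a target $h_b$ for which, in expectation over the sample, a constant fraction of the unseen shattered points are mispredicted; since each such point contributes $8\epsilon/n$ probability mass and regret $>\gamma$, the expected cut-off loss exceeds $\epsilon$, and a Markov/Chernoff argument upgrades this to ``$>\epsilon$ with probability $>\delta$.'' For the $\log(1/\delta)/\epsilon$ term: restrict to $\vz_1$ and $\vz_2$ (two points suffice, since $\sndim(\cH) \ge 1$ whenever the problem is nontrivial), put mass $\epsilon$ on $\vz_2$ with a random bit; with probability $(1-\epsilon)^m > \delta$ the sample never contains $\vz_2$, and conditioned on that the learner guesses its label with probability $1/2$, incurring expected cut-off loss $\epsilon/2 > \epsilon/2$ with probability $\ge \delta$ — tuning constants gives the $\log(1/\delta)/\epsilon$ scaling. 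Adding the two lower bounds (up to constants) yields $\Omega((\sndim(\cH) + \log(1/\delta))/\epsilon)$.

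The main obstacle, and the place where care is needed beyond a verbatim port of the regression argument, is the ``hard on unseen points'' step: I must verify that for \emph{any} (possibly improper, randomized) algorithm $A$, the strategy $A(\vz_i; S)$ it outputs on an unseen shattered point $\vz_i$ genuinely fails against a random choice of $b_i$. This is where property 1 of \Cref{def:gamma-sn-shattered} does the work: the $\inf_{\vx}\max$ being strictly greater than $\gamma$ means there is a $\gamma' > \gamma$ such that \emph{every} $\vx \in \Delta(\cA)$ has $r(\vz_i,\vx,f^{(g_0(\vz_i))}) > \gamma'$ or $r(\vz_i,\vx,f^{(g_1(\vz_i))}) > \gamma'$; hence $A$'s output, which is independent of $b_i$ conditioned on the sample not containing $\vz_i$, incurs regret $> \gamma$ against the random label $h_b(\vz_i)$ with probability $\ge 1/2$. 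I would also need to check the usual measure-theoretic bookkeeping (the event that $\vz_i$ is absent from the sample, the independence of $A$'s output from $b_i$ on that event) and confirm that the realizability requirement is met — every $h_b$ lies in $\cH$ by construction, so all the constructed distribution/target pairs are valid instances. The condition $\delta < \epsilon$ in the statement is exactly what makes the combination of the two regimes clean, so I would make sure the constants in the averaging argument respect it.
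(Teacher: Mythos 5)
Your proposal is correct and follows essentially the same route as the paper's proof: both adapt the standard PAC lower-bound argument of \citet{Attias23:regression} by placing mass $1-\Theta(\epsilon)$ on an easy context and $\Theta(\epsilon)$ spread over the $\gamma$-SN-shattered points, drawing the target $h_b$ via a uniform random bit string over the $g_0/g_1$ labels, using property 1 of the shattering definition to show every strategy output on an unseen point suffers cut-off loss with probability $1/2$, and upgrading the expectation bound via a (reverse-)Markov argument, with the $\log(1/\delta)/\epsilon$ term obtained separately from the probability $(1-\Theta(\epsilon))^m > \delta$ of missing the hard points and taking the max of the two bounds. The only differences are constant-level bookkeeping (the paper uses mass $16\epsilon$ and gets failure probability $>1/15$), which you already flag as needing tuning.
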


\begin{proof}[Proof sketch]
    The high-level idea is to reduce distributional learning in structured Stackelberg games to a hard identification problem over a $\gamma$-SN-shattered set. Such a set contains many contexts where the hypothesis class can realize two different follower types such that no leader strategy can simultaneously perform well against both. 
    
    The distribution we consider on this shattered set has most probability mass on one ``easy'' context and small but non-negligible mass spread over many ``hard'' contexts. With limited samples, the learner is unlikely to observe enough of the hard contexts to correctly distinguish the follower type. Since the hypothesis class can label these unseen contexts arbitrarily while remaining consistent with the data, the learner’s strategy on them is effectively a guess.\looseness-1
    
    Because each unseen context independently forces a $\gamma$-loss with constant probability, the learner’s expected cut-off error remains large unless the number of samples scales linearly with the size of the $\gamma$-SN-shattered set. Standard concentration arguments then show that achieving small error with high probability requires a sample size proportional to the $\gamma$-SN dimension.  
\end{proof}

\subsection{Sample complexity upper bound}

We conclude our treatment of the distributional setting by introducing the $\gamma$-valued Stackelberg-Graph dimension (henceforth SG dimension), which provides a sufficient condition for distributional learning in structured Stackelberg games. The SG dimension is a natural relaxation of the $\gamma$-Natarajan dimension and an adaptation of similar dimensions from classification \cite{natarajan1989learning, daniely2014:optimal} and regression \cite{Attias23:regression} to our setting.

\begin{definition}[$\gamma$-SG-shattered set]\label{def:gamma-sg-shattered}
    A set $\{\vz_1, \dots, \vz_n\} \subseteq \cZ$  is $\gamma$-SG-shattered by hypothesis class $\cH$ with respect to Stackelberg game $\cG$ if there exists a function $g:\cZ \to [K]$ such that for every $b \in \{0, 1\}^n$, there exists an $h_b \in \cH$ satisfying
    \begin{enumerate}[topsep=0pt,itemsep=0pt,parsep=0pt,partopsep=0pt]
        \item $h_b(\vz_i) = g(\vz_i) \;\; \forall i \in [n]$ such that $b_i=0$, and
        \item $\inf_{\vx \in \Delta(\cA)} \max_{j \in \{g(\vz_i), h_b(\vz_i)\} }r(\vz_i, \vx,  f^{(j)} ) \geq \gamma$ for all $i \in [n]$ such that $b_i=1$.
    \end{enumerate}
\end{definition}
\begin{definition}[$\gamma$-valued SG dimension]\label{def:sg_dim}
    The $\gamma$-valued SG dimension, $\sgdim(\cH)$, is the cardinality of the largest set that is $\gamma$-SG-shattered by $\cH$ with respect to $\cG$.
\end{definition}

We now present the main result of this section. Given a training sample $S$ of context–follower-type pairs, \Cref{alg:distributional} first computes the subset of hypotheses that are perfectly consistent with $S$. Then, on a new context, the algorithm plays the leader’s strategy that is optimal against an adversarially chosen follower type predicted at that context by any hypothesis consistent with $S$.

\begin{algorithm}
    \caption{Stackelberg Distributional Learner ($\fL^*$)}\label{alg:distributional}
    \begin{algorithmic}[1]
        \STATE \textbf{Input:} Stackelberg game $\cG$, hypothesis class $\cH$, sample $S = \{(\vz_i, f_i)\}_{i \in [n]}$, and test context $\vz \in \cZ$
        \STATE Compute $\cH|_{S} = \{h \in \cH: h(\vz_i) = f_i \; \forall i \in [n]\}$
        \STATE Compute $F = \cH|_{S}(\vz) = \{h(\vz) \in [K]: h \in \cH|_{S}\}$
        \STATE Play $\vx^* = \inf_{\vx \in \Delta(\cA)} \max_{i \in F} r(\vz, \vx, f^{(i)})$
\end{algorithmic}
\end{algorithm}

We now show that the $\gamma$-SG dimension upper bounds the PAC sample complexity in our setting. The proof of \Cref{thm:dist_ub} follows from adapting that of \citet{Attias23:regression} to our setting.\looseness-1

\begin{restatable}{theorem}{distub}\label{thm:dist_ub}
    For any class $\cH$, game, $\cG$, and $\epsilon, \delta, \gamma \in (0, 1)$, there is some constant $C_1$ such that \Cref{alg:distributional} achieves\looseness-1
    \begin{align*}
    m_{\fL^*}^r(\cH, \epsilon, \delta, \gamma) \leq C_1 \frac{\sgdim(\cH)\log(1/\epsilon) +\log(1/\delta)}{\epsilon}.
    \end{align*}
\end{restatable}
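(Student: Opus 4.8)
\emph{The plan} is to adapt the realizable-PAC template of \cite{Attias23:regression}: reduce the sample-complexity upper bound to a one-sided uniform-convergence ($\epsilon$-net) statement, and then control the resulting set system by the $\gamma$-SG dimension. \emph{Step 1 (reduction to an $\epsilon$-net).} Fix a target $h^*\in\cH$ and a distribution $\cD$ on $\cZ$. Since the training sample $S=\{(\vz_i,f_i)\}_{i\in[m]}$ is realizable, the version space $\cH|_S=\{h\in\cH: h(\vz_i)=f_i\ \forall i\}$ contains $h^*$, so for every test context $\vz$ the predicted-type set $F=\cH|_S(\vz)$ contains $h^*(\vz)$. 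Hence the min-max strategy $\vx^*$ played by $\fL^*$ satisfies
\[
 r\bigl(\vz,\vx^*,f^{(h^*(\vz))}\bigr)\ \le\ \max_{i\in F} r(\vz,\vx^*,f^{(i)})\ =\ \inf_{\vx\in\Delta(\cA)}\max_{i\in F} r(\vz,\vx,f^{(i)}),
\]
so $\fL^*$ incurs cut-off loss at $\vz$ only if $\vz$ belongs to the \emph{bad region} $B_\gamma(\cH|_S):=\{\vz: \inf_{\vx}\max_{i\in\cH|_S(\vz)} r(\vz,\vx,f^{(i)})>\gamma\}$. Moreover, at each training point $\vz_i$ every consistent hypothesis predicts $f_i$, so $\cH|_S(\vz_i)=\{f_i\}$ and the min-max loss there is $0$; therefore $B_\gamma(\cH|_S)\cap\{\vz_1,\dots,\vz_m\}=\varnothing$. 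Thus ``$\fL^*$ has cut-off error $>\epsilon$'' implies ``some bad region of $\cD$-mass $>\epsilon$ is missed by all $m$ samples'', which is exactly the bad event in the $\epsilon$-net theorem applied to the family $\mathcal{B}:=\{B_\gamma(\cH|_{S'}): S'\text{ realizable}\}$. It remains to bound the complexity of $\mathcal{B}$.

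\emph{Step 2 (complexity bound via $\sgdim$).} The key lemma is that $\mathcal{B}$ admits an $\epsilon$-net of size $O\bigl((\sgdim(\cH)\log(1/\epsilon)+\log(1/\delta))/\epsilon\bigr)$. I would prove this by a double-sampling argument: after symmetrization onto a $2m$-point ground set $P$, it suffices to bound the number of distinct traces $B_\gamma(\cH|_S)\cap P$ as $S$ ranges over $m$-subsets of $P$. Here one exploits the discreteness of the follower types: a point $\vz\in P$ lies in $B_\gamma(\cH|_S)$ only through follower types realized by the version space at $\vz$ that are min-max $\gamma$-separated from $h^*(\vz)$ in the sense of \Cref{def:gamma-sg-shattered} (with reference $g=h^*$); intuitively, distinct types that are \emph{not} $\gamma$-separated do not drive the error, since the leader can hedge against them. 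The ``$\gamma$-informative'' disagreement patterns on $P$ that can thus arise are witnessed by $\gamma$-SG-shattered subsets, so a Sauer--Shelah-type inequality bounds their number by $\sum_{i\le\sgdim(\cH)}\binom{2m}{i}$. Feeding this count into the standard symmetrization/union-bound step of the $\epsilon$-net theorem yields the claimed sample complexity for $\fL^*$ (and, by the reduction noted after \Cref{def:cut-off-pac}, for the ordinary PAC benchmark as well).

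\emph{Main obstacle.} The delicate step is the combinatorial lemma above, i.e.\ making the correspondence between $\fL^*$'s error and $\gamma$-SG-shattering tight. In the crisp multiclass case an ERM learner's error region is a \emph{single} set from a class of VC dimension at most the Graph dimension; here, by contrast, the leader's utility is continuous and $\fL^*$ is improper---it aggregates the \emph{entire} version space through a min-max over predicted types---so its error is governed by a whole family of regions, and one must verify that the exact-agreement constraint on the ``$b_i=0$'' coordinates and the min-max $\gamma$-separation on the ``$b_i=1$'' coordinates in \Cref{def:gamma-sg-shattered} are precisely what survives this aggregation, using \Cref{remark:distinct} to discard disagreement that is not $\gamma$-informative. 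This is exactly the part we adapt from \cite{Attias23:regression}. Finally, since \Cref{thm:dist_lb} shows the $\gamma$-SN dimension is also necessary, a Natarajan/Graph-style gap between the two bounds remains, mirroring the situation in multiclass classification.
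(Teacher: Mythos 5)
Your Step~1 (the reduction to the aggregated bad region $B_\gamma(\cH|_S)=\{\vz:\inf_{\vx}\max_{i\in\cH|_S(\vz)}r(\vz,\vx,f^{(i)})>\gamma\}$ missed by the sample) is sound, but the proposal breaks at the bridging claim in Step~2: you assert that a context lies in $B_\gamma(\cH|_S)$ only through types that are \emph{pairwise} $\gamma$-separated from $h^*(\vz)$, ``since the leader can hedge against'' the non-separated ones. Pairwise hedging does not compose into joint hedging. In the permutation game of \Cref{example2}, if the version space leaves $F=\{1,2,3\}$ possible at $\vz$ and the true type is $1$, every pairwise min-max regret equals $1/2$ while the three-way min-max equals $2/3$; for $\gamma\in(1/2,2/3)$ the context is in your bad region even though no single consistent hypothesis is pairwise $\gamma$-separated from $h^*$ there. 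Consequently membership in $B_\gamma(\cH|_{S'})$ need not be witnessed by the pairwise condition of \Cref{def:gamma-sg-shattered}, a set shattered by the family $\{B_\gamma(\cH|_{S'})\}$ need not yield a $\gamma$-SG-shattered set, and the claimed Sauer--Shelah bound $\sum_{i\le\sgdim(\cH)}\binom{2m}{i}$ on the traces of this version-space-indexed family is unsubstantiated --- this is not a ``delicate step to be checked'' but the point where the argument currently fails.

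The paper's proof avoids the aggregation entirely: it defines the cut-off error \emph{per hypothesis}, via the pairwise quantity $\inf_{\vx}\max_{j\in\{h(\vz),h^*(\vz)\}}r(\vz,\vx,f^{(j)})>\gamma$, and bounds the probability that some sample-consistent $h$ has pairwise cut-off error exceeding $\epsilon/2$ on a ghost sample (symmetrization onto a double sample plus random swaps). The combinatorial step you are missing is then handled by mapping each consistent $h$, restricted to the $2n$ points, to a \emph{partial} binary concept ($0$ where $h$ agrees with $h^*$, $1$ where it is pairwise $\gamma$-far, $\star$ otherwise), noting that the VC dimension of this partial class is at most $\sgdim(\cH)$ directly from \Cref{def:gamma-sg-shattered}, and invoking the disambiguation lemma for partial concept classes of \citet{Alon21:partial} to obtain an effective class of size $n^{O(\sgdim(\cH)\log n)}$ for the union bound. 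So the counting is done over hypotheses' pairwise disagreement patterns, not over version-space bad regions; to salvage your $\epsilon$-net formulation you would either have to relate the aggregate min-max to the pairwise one (which costs a factor depending on the number of surviving types, or a change of $\gamma$) or restructure the argument per hypothesis as the paper does.
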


\begin{proof}[Proof sketch]
    The key observation in our analysis is that the leader only needs to worry about contexts where different hypotheses would force meaningfully different leader utilities.
    \Cref{alg:distributional} keeps all hypotheses that are consistent with the data and, given a new context, plays the strategy that is most robust to all follower types predicted by those hypotheses. 
    If the learner performs poorly on a noticeable fraction of contexts, then there must exist many contexts where some hypothesis in the class disagrees with the true one and where this disagreement induces loss at least $\gamma$. 
    Such contexts form a $\gamma$-SG-shattered set.
    
    The definition of the $\gamma$-SG dimension limits the size of these hard sets. Using standard uniform convergence and symmetrization arguments, we show that once the sample size exceeds this dimension, no hypothesis consistent with the data can be $\gamma$-far from the true one on more than an $\epsilon$ fraction of contexts. As a result, the chosen strategy achieves near-optimal expected utility with high probability.\looseness-1
\end{proof}

\section{Computational complexity considerations}\label{sec:efficient}
Computational complexity has been a longstanding concern in the analysis of Stackelberg games. In the setting where a leader interacts with a single (known) follower type, \citet{conitzer2006computing} show that computing the optimal mixed strategy for the leader can be achieved in polynomial time using linear programming. 
However in more complex Stackelberg games such as Bayesian Stackelberg games, Stackelberg security games, and even 3-player Stackelberg games computing the leader's optimal strategy has been shown to be NP-hard \cite{conitzer2006computing, korzhyk10:stackelberg}. 
Side information that is predictive of the follower's type may help us circumvent some of these impossibility results. 
We provide a preliminary analysis of computational complexity in our setting here, which results in polynomial runtimes and sublinear (although not necessarily optimal) regret guarantees.\looseness-1

In a structured Stackelberg game with hypothesis class $\cH$, the leader's optimal policy can be learned in polynomial time if $\cH$ is efficiently online learnable.
We give a high-level overview of this approach. First, we consider a learning algorithm for the hypothesis class $\cH$. The leader can use this algorithm to predict the follower's type and based on that compute the optimal mixed strategy against that type as if her prediction were correct. As shown by \citet{conitzer2006computing}, for a fixed and known follower type, computing the leader's optimal mixed strategy can be achieved in polynomial time using linear programming. The learner will make a finite number of follower type mispredictions before finding $h^*$, and at this point will play optimally against every follower type that appears in every subsequent round. Therefore, for any hypothesis class in the realizable setting that is efficiently online- (and therefore also PAC-) learnable, we can establish a computationally efficient algorithm for learning the leader's optimal policy. We now present one such example of a hypothesis class that can be learned efficiently when used to predict the followers' types.\looseness-1

\begin{example}\label{ex:mdl}
Consider the hypothesis class, $\cH_{\text{MDL}}$ of multiclass decision lists. Each decision list is a sequence of ``if-then'' rules, specifically of the form ``if $\ell_1$ then $k_1$, else if $\ell_2$ then $k_2$, else if $\ell_3$ then $k_3$, ..., else $k_{\text{final}}$''. Each decision list $h \in \cH$ can be represented by a list $[(\ell_1, k_1), \ldots, (\ell_n, k_n), (k_{\text{final}})]$ that consists of condition-value pairs $(\ell_i, k_i)$ and a final default value $(k_{\text{final}})$. Each condition $\ell_i$ is a literal (of either a variable in a set of input variables, $\cZ = \{x_1, \ldots, x_n\}$, or its negation) and each value $k_i$ belongs to a set of output labels, $[K]$. A decision list is evaluated sequentially. It takes as input an assignment of variables in $\cZ$ i.e. $\vz \in \{0, 1\}^{n}$ and its output is the value corresponding to the first condition that is satisfied by the input, $\vz$.\looseness-1 
\end{example}

\begin{restatable}{theorem}{thmefficientalg}\label{thm:efficient-alg}
    The class of multiclass decision lists in Example~\ref{ex:mdl} is efficiently online learnable.
\end{restatable}

\section{Conclusion and future directions}
We initiate the study of structured Stackelberg games, where contextual information is predictive of the follower's type. Our main results consist of a complete characterization of instance-optimal regret bounds in the online setting, and PAC sample complexity guarantees in the distributional setting. 
There are several exciting directions for future work.\looseness-1

\xhdr{Computationally efficient algorithms.} 
Our main focus in this work is on the statistical complexity of learning in structured Stackelberg games. An interesting avenue for future research is to consider the computational complexity of learning in this setting. 
We conduct a preliminary exploration of this direction in in~\Cref{sec:efficient}, and find that our results open the door for efficient (albeit so far sub-optimal) learning algorithms in some settings. 
It would be interesting to explore whether there are settings where the structure introduced in our work can lead to both optimal and efficient algorithms, or if there is a fundamental tradeoff between computational complexity and tight regret guarantees.

\xhdr{The agnostic setting.} 
We focus on the realizable setting, where the leader is given a class of hypotheses with the promise that it includes the optimal predictor. The agnostic learning setting is a natural relaxation, where this assumption is not made and the goal is instead to do as well as the best hypothesis in the class. 
As such, providing a complete characterization for the agnostic case remains an important future direction.\looseness-1

\xhdr{Beyond worst-case analysis.} Another direction for future research is to analyze structured Stackelberg games in the ``beyond worst-case'' framework (e.g.~\citet{8555142, DBLP:journals/jacm/HaghtalabRS24, balcan2020semi, 10.5555/3540261.3541415}). It has recently been shown that smooth online learning is as easy as distributional learning, with regret bounds scaling with the VC dimension of the hypothesis class \cite{block2022smoothed}. Extending these findings to structured Stackelberg games might lead to improved results both from a statistical and computational point of view.\looseness-1

\xhdr{Multiple followers and bandit feedback.} Finally, our analysis focuses on the setting where the leader interacts with a single follower, and full feedback is available about the follower's type. 
Relaxing either of these directions is an important avenue for future research. For the former, techniques that leverage the geometric structure of the problem, as in \citet{DBLP:journals/corr/abs-2510-01387} might be useful.

\section*{Acknowledgments}
We would like to thank the anonymous reviewers for helpful comments and suggestions. 
We are also grateful to Brian Hu Zhang for helpful discussions on an earlier version of this paper.
This work was supported in part by the Simons Investigator Award MPS-SICS-00826333, a Microsoft Research Faculty Fellowship, and NSF grant IIS1901403.
KF is supported in part by the Cooperative AI PhD Fellowship from the Cooperative AI Foundation.
For part of this work, KH was a Ph.D. student at Carnegie Mellon University.\looseness-1

\section*{Impact Statement}

This paper presents work whose goal is to advance the field of Machine Learning. There are many potential societal consequences of our work, none which we feel must be specifically highlighted here.

\bibliography{refs}
\bibliographystyle{main/icml2026}

\newpage
\appendix
\onecolumn

\section{Applications beyond Stackelberg games}\label{sec:others}

Our results also apply to the problems of learning to bid in simultaneous second-price auctions with side information and Bayesian persuasion with both public and private states. 
These settings are formally introduced in~\citet{balcan2025nearlyoptimalbanditlearningstackelberg}; we give an overview here. 
They are mathematically equivalent to the setting we study in the main body, with the exception of the form that the learner's strategy space takes (detailed below). 

\subsection{Simultaneous second-price auctions with side information}
This application is an extension of a setting in~\citet{daskalakis2016learning} to incorporate side information. 
In each round (or historical data point in the distributional setting), bidders simultaneously bid on a set of $m$ items. 
They receive the bundle of items for which they are the highest bidder, and they pay a price corresponding to the second-highest bid for each item. 

The learner in this setting is a bidder whose strategy space consists of a set of possible bids for each item. 
The bidder's valuation depends on the entire bundle of items they receive, e.g. the bidder may have a low valuation for item A and item B \emph{in silos}, but a high valuation for the combination of the two. 
In addition to the bundle of items, the bidder's valuation also depends on contextual information, e.g. different seasons or fashion trends may influence a bidder's valuation for a bundle of clothing items. 

Other players' bids roughly correspond to different follower types, and we posit a hypothesis class mapping contextual information to different sets of other players' bids. 
Our results carry over to this application as-is when the space of the other players' bids is discrete; the only difference is that the leader's strategy space is now the bid space $[0, 1]^m$ instead of a probability simplex. 

\subsection{Bayesian persuasion with both public and private states}
In Bayesian persuasion~\cite{kamenica2011bayesian}, a sender (analogous to the leader in Stackelberg games) strategically reveals payoff-relevant information about an underlying \emph{private} state of the world to a receiver (analogous to the follower), who then takes a payoff-relevant action. 
This sub-section considers a generalization of Bayesian persuasion where there is some additional payoff-relevant information about the state which is common knowledge to both players. 
Receiver types map to follower types, and we posit a relationship between the common knowledge and the receiver's type. 
As is the case in the previous sub-section, our results carry over here as-is, with the sender optimizing over a convex polytope strategy space instead of a probability simplex. 
\section{Details and proofs from~\Cref{sec:online}}

\maximallydifferent*

\begin{proof}
    We construct such a game explicitly. Let the context space be $\cZ = \{0, 1\} \times [0, 1]$ and let the set of follower types be $\{f^{(1)}, f^{(2)}, f^{(3)}, f^{(4)}\}$. Consider the follower types $f^{(1)}$ and $f^{(2)}$ with utility functions $u_{f^{(1)}}, u_{f^{(2)}} : \cZ \times \cA \times \cA_f \rightarrow [0, 1]$ chosen such that for every context $\vz \in \cZ$, the leader's optimal action differs, i.e. $\pi_*^{(1)}(\vz) \neq \pi_*^{(2)}(\vz)$, where $\pi_*^{(i)}$, refers to the optimal policy against follower $f^{(i)}$. 
    Concrete payoff matrices for all follower types are shown in \Cref{table:utilitiesformaxdiff} (the leader is the row player and the follower is the column player).

\begin{table}[!htb]
    \begin{subtable}{.5\linewidth}
      \centering
        \begin{tabular}{|c|c|c|}
            \hline
            & $F_1$ & $F_2$  \\ \hline
            $L_1$ & $1, 0.5$ & $0.5, 0$ \\ \hline
            $L_2$ & $0.25, 1$  & $0.75, 0$  \\ \hline
        \end{tabular}
        \caption{Follower type $f^{(1)}$}
    \end{subtable}%
    \hfill
    \begin{subtable}{.5\linewidth}
      \centering
        \begin{tabular}{|c|c|c|}
            \hline
            & $F_1$ & $F_2$  \\ \hline
            $L_1$ & $1, 0$ & $0.5, 0.5$ \\ \hline
            $L_2$ & $0.25, 0$  & $0.75, 1$  \\ \hline
        \end{tabular}
        \caption{Follower type $f^{(2)}$}
    \end{subtable} 
    \begin{subtable}{.5\linewidth}
        \centering
        \begin{tabular}{|c|c|c|}
            \hline
            & $F_1$ & $F_2$  \\ \hline
            $L_1$ & $1,  1-\vz[1]$ & $0.5, 0.5$ \\ \hline
            $L_2$ & $0.25, 0.5$  & $0.75, \vz[1]$  \\ \hline
        \end{tabular}
        \caption{Follower type $f^{(3)}$}
    \end{subtable}
    \begin{subtable}{.5\linewidth}
      \centering
        \begin{tabular}{|c|c|c|}
            \hline
            & $F_1$ & $F_2$  \\ \hline
            $L_1$ & $1, \vz[1]$ & $0.5, 0.5$ \\ \hline
            $L_2$ & $0.25, 0.5$  & $0.75, 1-\vz[1]$  \\ \hline
        \end{tabular}
        \caption{Follower type $f^{(4)}$}
    \end{subtable} 
    \caption{Utility tables for \Cref{thm:maximallydifferent}}\label{table:utilitiesformaxdiff}
\end{table}

    We now construct the hypothesis class $\cH = \{h_{\gamma_1, \gamma_2} : \forall \;\; 0 < \gamma_1 < 0.5, \; 0.5 \leq \gamma_2 < 1\}$, where
    \begin{equation*}
        h_{\gamma_1, \gamma_2}(\vz) = 
        \begin{cases}
            f^{(1)} \;\; &\text{ if } 0 \leq \vz[2] < \gamma_1\\
            f^{(2)} \;\; &\text{ if } \gamma_2  <  \vz[2] \leq 1\\
            f^{(3)} \;\; &\text{ if } (\vz[1] = 0 \text{ and } \gamma_1 \leq \vz[2] < 0.5) \text{ or } (\vz[1] = 1 \text{ and } 0.5 \leq \vz[2] \leq \gamma_2 )\\
            f^{(4)} \;\; &\text{ if } (\vz[1] = 1 \text{ and } \gamma_1 \leq \vz[2] < 0.5) \text{ or } (\vz[1] = 0 \text{ and } 0.5 \leq \vz[2] \leq \gamma_2 )\\
        \end{cases}
    \end{equation*}
    Observe that by construction, for any true hypothesis $h_{*} \in \cH$ we have that the optimal policy $\pi_{h^*}$, will satisfy $\pi_{h^*}(\vz) = \pi_{*}^{(1)}(\vz)$ if $\vz[2] < \frac{1}{2}$ and $\pi_{h^*}(\vz)= \pi_*^{(2)}(\vz)$ otherwise. Therefore, no matter which hypothesis is used to predict the follower's type, the utility of the leader will be optimal.
    On the other hand, determining $h^*$ requires learning the exact cutoff values, $\gamma_1$ and $\gamma_2$. This corresponds to learning the class of linear thresholds, which has infinite Littlestone dimension (cf. Example 21.4 in \cite{uml}). Therefore, $\ldim(\cH) = \infty$. 
\end{proof}

\lb*

\begin{proof}
    We prove by induction on the depth $d$ of shattered trees that across $d$ timesteps the learner can be forced to incur regret at least as much as the weight of the tree's root node. Since the SL dimension is defined as the supremum root node weight over all finite-depth trees, proving this suffices to show the desired claim.  In case the supremum in the definition of $\sldim(\cH)$ is not achieved by any finite tree, there will always be an additive error of $\epsilon$ that will get arbitrarily small as $d \to \infty$.

    (Base Case) We consider a shattered tree of depth $1$. Then
    \begin{align*}
        \rho_\varnothing &= \inf_{\vx \in \Delta(\cA)}\max_{i \in \cH(\vz_\varnothing)} r(\vz_\varnothing, \vx, f^{(i)})
        \\ & = \inf_{\vx \in \Delta(\cA)}\max_{i \in \cH(\vz_\varnothing)}  \biggl[ u(\vz_{\varnothing}, \pi_{h^{*}}(\vz_\varnothing), b_{f^{(i)}}(\vz_{\varnothing}, \pi_{h^{*}}(\vz_\varnothing))) - u(\vz_\varnothing, \vx, b_{f^{(i)}}(\vz_{\varnothing}, \vx))  \biggr].
    \end{align*}
    So $\rho_\varnothing$ corresponds to the min-max value of the one-shot Stackelberg game and therefore, to a lower bound on the instantaneous regret of the learner.
    
    (Inductive step) We assume that for any tree of depth $d$ that is shattered by $\cH$ under $\cG$, the weight of the root node lower bounds the learner's cumulative regret across $d$ timesteps. We now consider a tree $\mathcal{T}_{d+1}$ of depth $d+1$ that is shattered by $\cH$ under Stackelberg game $\cG$. We will prove that for such a tree, the adversary can guarantee regret at least $\rho_{\varnothing}$, where $\vz_\varnothing$ is the root node of this tree. By \Cref{def:shattered1}, we know that $\rho_\varnothing = \inf_{\vx \in \Delta(\cA)}\max_{i \in \cH(\vz_\varnothing)} \biggl[ r(\vz_\varnothing, \vx, f^{(i)})+\rho_{\varnothing i} \biggr]$.
    Let $\vx_{*} \in \Delta(\cA)$ and $i_* \in [K]$ be the respective arginf and argmax values in the above expression.
    
    If the learner chooses to play $\vx_*$ at this round, then by definition the adversary can guarantee instantaneous regret equal to $r(\vz_\varnothing, \vx_*, f^{(i_*)})$ by selecting the true follower type to be $i_*$ (which is realizable by $\cH$ by definition of a shattered tree). If this happens, then by the inductive hypothesis the adversary can guarantee regret $\rho_{i_*}$ in the next $d$ timesteps and regret $\rho_\varnothing$ overall.

    We will now prove the same even when the learner plays a strategy different from $\vx_*$. Say that there exists $\tilde{\vx} \in \Delta(\cA)$ such that the learner can guarantee instantaneous regret $r(\vz_\varnothing, \tilde{\vx}, f^{(\tilde{i})}) < r(\vz_\varnothing, \vx_*, f^{(i_*)})$, where $\tilde{i}= \argmax_{i \in \cH(\vz_\varnothing)} \biggl[ r(\vz_\varnothing, \tilde{\vx}, f^{(i)})+\rho_i\biggr]$. 
    There are two cases.
    
    (Case 1) Suppose $\tilde{i} = i_*$. Then we know that
    \begin{align*}
        & \inf_{\vx \in \Delta(\cA)}\max_{i \in \cH(\vz_\varnothing)} \biggl[ r(\vz_\varnothing, \vx, f^{(i)})+\rho_i \biggr] =\inf_{\vx \in \Delta(\cA)} \biggl[ r(\vz_\varnothing, \vx, f^{(i_*)})+\rho_{i_*}\biggr].
        \end{align*}
        Since $\vx_*$ achieves the infimum value
        \begin{align*}
        r(\vz_\varnothing, \vx_*, f^{(i_*)})+\rho_{i_*} &\leq r(\vz_\varnothing, \tilde{\vx}, f^{(i_*)})+\rho_{i_*} \;\; \forall \tilde{\vx} \in \Delta(\cA) \\
        \implies r(\vz_\varnothing, \vx_*, f^{(i_*)}) &\leq r(\vz_\varnothing, \tilde{\vx}, f^{(i_*)}) \;\; \forall \tilde{\vx} \in \Delta(\cA), 
    \end{align*}
    which leads to a contradiction.
    
    (Case 2) Suppose $\tilde{i} \neq i_*$. Then we have that 
    \begin{align*}
        \inf_{\vx \in \Delta(\cA)}\max_{i \in \cH(\vz_\varnothing)} \biggl[ r(\vz_\varnothing, \vx, f^{(i)})+\rho_i)\biggr] \leq \max_{i \in \cH(\vz_\varnothing)} \biggl[ r(\vz_\varnothing, \vx', f^{(i)})+\rho_i)\biggr] \;\; \forall \vx' \in \Delta(\cA)
    \end{align*}
    Taking $\vx' = \tilde{\vx}$, we get 
    \begin{align*}
        \inf_{\vx \in \Delta(\cA)}\max_{i \in \cH(\vz_\varnothing)} \biggl[ r(\vz_\varnothing, \vx, f^{(i)})+\rho_i)\biggr] &\leq \max_{i \in \cH(\vz_\varnothing)} \biggl[ r(\vz_\varnothing, \tilde{\vx}, f^{(i)})+\rho_i)\biggr] \\
        \implies r(\vz_\varnothing, \vx_*, f^{(i_*)})+\rho_{i_*}& \leq  r(\vz_\varnothing, \tilde{\vx}, f^{(\tilde{i})})+\rho_{\tilde{i}}\\
        \implies  r(\vz_\varnothing, \vx_*, f^{(i_*)})- r(\vz_\varnothing, \tilde{\vx}, f^{(\tilde{i})})& \leq \rho_{\tilde{i}}- \rho_{i_*}
    \end{align*}
    We denote $\ell := r(\vz_\varnothing, \vx_*, f^{(i_*)})- r(\vz_\varnothing, \tilde{\vx}, f^{(\tilde{i})})$ and by definition of $\tilde{\vx}$ we know that $\ell >0$. We get that $\rho_{\tilde{i}} \geq \rho_{i_*}+ \ell$. 
    Now by the boundedness of the utility function, we know that if $\rho_{i_*}+ \ell \geq d$, then the above inequality leads to a contradiction (the learner cannot incur strictly more than $d$ regret over $d$ timesteps). 
    
    Otherwise, we have a subtree with root node weight at least $\rho_{i_*}+ \ell$ that is shattered by the induced hypothesis space $\cH^{(\vz_\varnothing \to f^{(\tilde{i})})}$. By the inductive hypothesis we know that  the adversary can force at least $\rho_{i_*}+ \ell$ regret over the remaining $d$ timesteps and therefore at least $r(\vz_\varnothing, \tilde{\vx}, f^{(\tilde{i})}) + \rho_{i_*}+\ell =  r(\vz_\varnothing, \vx_*, f^{(i_*)})+ \rho_{i_*} = \rho_\varnothing$ 
    regret over $d+1$ timesteps, as desired.
\end{proof}

\thmreg*
\begin{proof}
    We will show that at each timestep $t$, if the learner incurred instantaneous regret $r_t = r(\vz_t, \vx_t, f_t)$, 
    then the the SL dimension must have decreased by at least $r_t$. In other words, $\sldim(V_{t+1}) \leq \sldim(V_t) - r_t$, 
    where $V_t$ is the set of hypotheses consistent with the history up to timestep $t$. 
    We have that the regret at timestep $t$ is
    \begin{align*}
        r_t &= u(\vz_t, \vx_{*}^{(i, \vz_t)}, b_{f_t}(\vz_t, \vx_{*}^{(i, \vz_t)})) - u(\vz_t, \vx_t, b_{f_t}(\vz_t, \vx_t)) + (\sldim(V_t^{(\vz_t \to f_t)}) - \sldim(V_t^{(\vz_t \to f_t)})) \\
        & \leq \max_{i \in V_t(\vz_t)} \left[ u(\vz_t, \vx_{*}^{(i, \vz_t)}, b_{f_t}(\vz_t, \vx_{*}^{(i, \vz_t)}) - u(\vz_t, \vx_t, b_{f_t}(\vz_t, \vx_t)) + \sldim(V_t^{(\vz_t \to i)}) \right] - \sldim(V_t^{(\vz_t \to f_t)}) \\
        & = \inf_{\vx \in \Delta(\cA)} \max_{i \in V_t(\vz_t)} \left[ u(\vz_t, \vx_{*}^{(i, \vz_t)}, b_{f_t}(\vz_t, \vx_{*}^{(i, \vz_t)}) - u(\vz_t, \vx, b_{f_t}(\vz_t, \vx)) + \sldim(V_t^{(\vz_t \to i)}) \right] - \sldim(V_{t+1}) \tag{by definition of \Cref{alg:newssoa}} \\
        & \leq \sup_{\vz \in \cZ} \inf_{\vx \in \Delta(\cA)} \max_{i \in V_t(\vz)} \left[ u(\vz, \vx_{*}^{(i, \vz)}, b_{f_t}(\vz, \vx_{*}^{(i, \vz)}) - u(\vz, \vx, b_{f_t}(\vz, \vx)) + \sldim(V_t^{(\vz \to i)}) \right] - \sldim(V_{t+1}) \\
        & = \sldim(V_t) - \sldim(V_{t+1}),
    \end{align*}
    where $\vx_{*}^{(i, \vz)}$ is the leader's optimal strategy against follower $f^{(i)}$ for context $\vz$.
\end{proof}

\relateldimtosldim* 
\begin{proof}
  \sloppy We assume for the sake of contradiction that there exists a hypothesis class with $\ldim(\cH) = d$ and a Stackelberg game $\cG$, such that $\sldim(\cH) >d$. According to \Cref{def:sldim2}, this implies that there exists a shattered tree, whose minimum-weighted root-to-leaf path has weight strictly greater than $d$. This implies that all paths in the tree have cumulative weight strictly greater than $d$. Given that the weight of each edge in the tree is at most $1$, the length of the minimum path (and therefore of all other paths in the tree) must be of length at least $d+1$. This implies that there exists a tree of depth $d+1$ that is shattered by $\cH$, which leads to a contradiction.
\end{proof}

\suboptimalsoa*

\begin{proof}
    Consider $\cZ = \{1, 2, 3, 4\}$ and followers $\{f^{(1)}, f^{(2)}, f^{(3)}\}$. We define $\cH = \{h_{\sigma}(z) := \mathbbm{1}\{z \neq 4\} \cdot f^{(\sigma(z))} + \mathbbm{1}\{z = 4\} \cdot  f^{(3)}\}_{ \sigma \in \Sigma}  \cup \{h, h'\}$ , where $\Sigma$ is the set of all permutations of the set $\{1, 2, 3\}$ and $h, h'$ are defined as follows.
    \begin{equation*}
        h(z) = 
        \begin{cases}
            f^{(3)} \;\; &\text{ if } z = 1\\
            f^{(2)} \;\; &\text{ if } z = 2\\
            f^{(1)} \;\; &\text{ if } z = 3\\
            f^{(2)} \;\; &\text{ if } z = 4\\
        \end{cases}
    \end{equation*}
    \begin{equation*}
        h'(z) = 
        \begin{cases}
            f^{(3)} \;\; &\text{ if } z = 1\\
            f^{(1)} \;\; &\text{ if } z = 2\\
            f^{(2)} \;\; &\text{ if } z = 3\\
            f^{(2)} \;\; &\text{ if } z = 4\\
        \end{cases}
    \end{equation*}

    Notice that the Littlestone dimension of this hypothesis class is at least $2$. This follows from the $\cH$-shattered tree construction shown in \Cref{fig:ldim2}.

    Consider now a Stackelberg game with utility functions $u_{f^{(1)}}, u_{f^{(2)}}, u_{f^{(3)}}$, defined such that if $z =1$ or $z = 2$ then $\pi_{*}^{(1)}(z) =    \pi_{*}^{(2)}(z) \neq   \pi_{*}^{(3)}(z)$, if $z = 3$ then $  \pi_{*}^{(1)}(3) =   \pi_{*}^{(2)}(3) =   \pi_{*}^{(3)}(3)$, and if $z=4$, then  $ \pi_{*}^{(1)}(4) \neq \pi_{*}^{(2)}(4) = \pi_{*}^{(3)}(4)$.
    Here, we let $\pi_*^{(i)}:\cZ \to \Delta(\cA)$ denote the optimal leader's policy against follower type $f^{(i)}$.
    Concretely, consider a Stackelberg game with the leader being the row player and each follower being the column player, as shown in \Cref{table:utilities}.

\begin{table*}
    \caption{Utility tables for the leader (row player) and each follower type (column player)}\label{table:utilities}
    \begin{subtable}{.25\linewidth}
      \centering
      \caption{Follower type $f^{(1)}$}
        \begin{tabular}{|c|c|c|}
            \hline
            & $F_1$ & $F_2$  \\ \hline
            $L_1$ & $1, 1/2$ & $0, 0$ \\ \hline
            $L_2$ & $0, 0$  & $1, 0$  \\ \hline
        \end{tabular}
    \end{subtable}%
    \begin{subtable}{.33\linewidth}
      \centering
      \caption{Follower type $f^{(2)}$}
        \begin{tabular}{|c|c|c|}
            \hline
            & $F_1$ & $F_2$  \\ \hline
            $L_1$ & $1, \mathbbm{1}_{\{z \neq 4\}}/2$ & $0, 0$ \\ \hline
            $L_2$ & $0, 0$  & $1, \mathbbm{1}_{\{z = 4\}}/2$  \\ \hline
        \end{tabular}
    \end{subtable} 
    \begin{subtable}{.4\linewidth}
        \caption{Follower type $f^{(3)}$}
        \centering
        \begin{tabular}{|c|c|c|}
            \hline
            & $F_1$ & $F_2$  \\ \hline
            $L_1$ & $1,  \mathbbm{1}_{\{z =3 \}}/2$ & $0, 0$ \\ \hline
            $L_2$ & $0, 0$  & $1, \mathbbm{1}_{\{z \neq 3\}}/2$  \\ \hline
        \end{tabular}
    \end{subtable}
\end{table*}
    
We now show that the $\sldim(\cH)= 3/4$. We will consider any possible SL tree. First, note that without loss of generality we can consider trees that do not include the context $z = 4$ since for $z=4, \; \pi_{*}^{(2)}(4) = \pi_{*}^{(3)}(4)$, where $f^{(2)}$ and $f^{(3)}$ are the only possible follower types predicted by hypotheses in $\cH$ for this context. Similarly, for $z=3$, since the leader's optimal strategy against all follower types is identical ($\pi_{*}^{(1)}(3) = \pi_{*}^{(2)}(3) = \pi_{*}^{(3)}(3)$). 

    So it suffices to consider contexts $1$ and $2$. In \Cref{fig:ssoa} we show the two shattered SL tree constructions with $z=1$ and $z=2$ as the root nodes respectively. We start with the tree in \Cref{subfig:ssoa1}. For $z=2$, its not hard to show that $\inf_{\vx} \max_{j \in \{1, 3\}} r(\vx, 2, f^{(j)}) = \inf_{\vx} \max_{j \in \{2, 3\}} r(\vx, 2, f^{(j)}) = 1/2$, achieved at $\vx = (1/2, 1/2)$. For the root node $z=1$, we have that $\inf_\vx \max\{r(\vx, 1, f^{(1)}) +1/2, \; r(\vx, 1, f^{(2)}) +1/2, \; r(\vx, 1, f^{(3)})\} = 3/4$ achieved at $\vx = (3/4, 1/4)$. Therefore, the weight of the root node is $3/4$. A similar analysis shows that the root node weight of the SL-tree in \Cref{subfig:ssoa2} is $3/4$. Therefore, $\sldim(\cH)$ (and also the worst-case cumulative regret) equals $3/4$.
    
    We consider the case, where the adversary present context $1$ followed by context $2$. 
    We go through the SOA outputs per timestep below.
\begin{itemize}
     \item ($t =1$) This node has three children corresponding to follower types $f^{(1)}, f^{(2)}, f^{(3)}$, with $f^{(1)}$ and $ f^{(2)}$ leading to a full binary subtree of maximal depth 1 each. To see this, notice that (1) neither $h$ nor $h'$ are consistent with the mapping of context $z=1$ to follower type $f^{(1)}$ or $f^{(2)}$, and (2) $\cH \setminus \{h, h'\}$ consists of all permutations mapping contexts $\{1, 2, 3\}$ to follower types $\{f^{(1)}, f^{(2)}, f^{(3)}\}$. Therefore, at most two of these contexts can exist in any root-to-leaf path of any $\cH$-shattered tree. The child corresponding to follower type $f^{(3)}$ leads to a full binary subtree of maximal depth at least $2$, as shown in \Cref{fig:ldim2}. So, the predicted follower type is $f_1 = f^{(3)}$ and the strategy played is $x_1 = \pi_{*}^{(3)}(1)$. However, the true follower type is $f_1^{*} = f^{(1)}$ (instantaneous regret equals 1).
     \item ($t =2$) This node has two children mapping $z=2$ to follower types $f^{(2)}$ and $f^{(3)}$. No subsequent context can lead to a full-binary tree that is consistent with any of these assignments, so SOA tie-breaks. The adversary in this case can always force a mistake by claiming the correct follower type to be the one not predicted by the algorithm. So if SOA predicts $f_2 = f^{(2)}$, and plays strategy $x_2 = \pi_{*}^{(2)}(2)$ the adversary can choose the true follower type to be $f_2^{*} = f^{(3)}$ (instantaneous regret equals 1). 
\end{itemize}
This leads to SOA mispredicting the follower type and getting suboptimal utility in both rounds. Since the Stackelberg Littlestone dimension is $3/4$ and we have shown that for the above sequence of contexts running SOA yields regret $2$, this concludes our proof. 
\end{proof}

\begin{figure}
    \centering
    \includegraphics[width=0.5\textwidth]{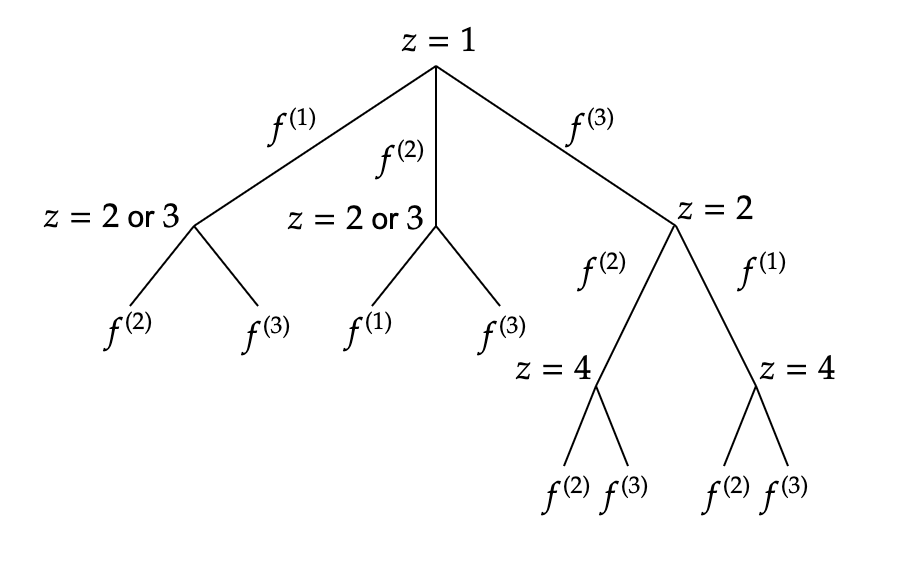}
    \caption{Example construction of an $\cH$-shattered Littlestone tree. Each node represents a context in $\cZ = \{1, 2, 3, 4\}$ and each edge corresponds to a follower type.}
    \label{fig:ldim2}
\end{figure}

\begin{figure*}
    \centering
    \begin{subfigure}[t]{0.5\textwidth}
        \centering
        \includegraphics[height=1.5in]{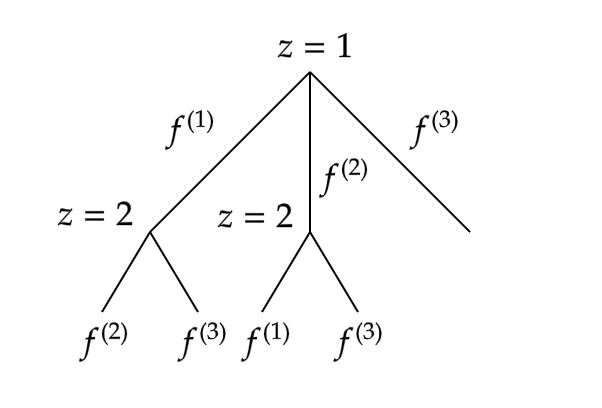}
        \caption{}\label{subfig:ssoa1}
    \end{subfigure}%
    ~ 
    \begin{subfigure}[t]{0.5\textwidth}
        \centering
        \includegraphics[height=1.5in]{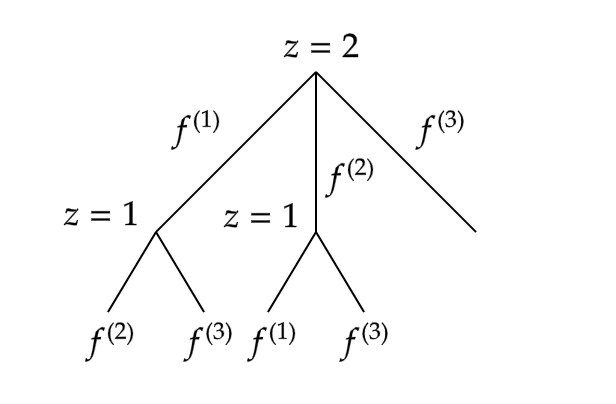}
        \caption{} \label{subfig:ssoa2}
    \end{subfigure}
    \caption{$\cH$-shattered SL-Tree with (a) $z=1$ as the root node and (b) $z=2$ as the root node.}\label{fig:ssoa}
\end{figure*}

\subsection{A technical subtlety}\label{sec:subtlety}
 An important component of our analysis is that the leader must be able to compute the optimal strategy for a given context and follower type. For this, we state that for any context $\vz \in \cZ$ the leader incurs negligible regret by restricting herself to a subset $\mathcal{E}_\vz \subseteq \Delta(\cA)$ which is finite and computable. To formally state our result we present the notion of a \textit{contextual best-response region}, which was first introduced in \cite{harrisregret} and which we repeat here for completeness.

\begin{definition}[Contextual Follower Best-Response Region]
    For context $\vz$, follower type $f^{(i)}$, and follower action $a_f \in \cA_f$, let $X_{\vz}(f^{(i)}, a_f) \subseteq \cX$ denote the leader's mixed strategies such that a follower of type $f^{(i)}$ best-responds to all $x \in X_{\vz}(f^{(i)}, a_f)$ by playing action $a_f$, that is, $X_{\vz}(f^{(i)}, a_f) = \{\vx \in \cX: b_{f^{(i)}}(\vz, \vx) = a_f\}$
\end{definition}
Notice that any possible follower best-response region is (1) a subset of $\Delta(\cA)$ and (2) the intersection of finitely-many halfspaces. 
Therefore, every $X_{\vz}(f^{(i)}, a_f)$ is convex and bounded. 
However, the leader may not be able to optimize over the extreme points of a best-response region, as it may not be closed due to the followers' tie-breaking. 
\begin{definition}[$\delta$-approximate extreme points]\label{def:extreme-points}
    Fix a context $\vz \in \cZ$ and consider the set of all non-empty contextual best-response regions. 
    For $\delta > 0$, $\cE_{\vz}(\delta)$ is the set of leader mixed strategies such that for all best-response functions $\sigma$ and any $\vx \in \Delta(\cA)$ that is an extreme point of cl$(\cX_{\vz}(\sigma))$, $\vx \in \cE_{\vz}(\delta)$ if $\vx \in \cX_{\vz}(\sigma)$. 
    Otherwise there is some $\vx' \in \cE_{\vz}(\delta)$ such that $\vx' \in \cX_{\vz}(\sigma)$ and $\| \vx' - \vx \|_1 \leq \delta$.
\end{definition}
Notice that for sufficiently small $\delta$ i.e. $\delta = \mathcal{O}(\frac{1}{T})$, we get that the additional regret of the leader by choosing actions over $\mathcal{E}^{(i)}$  rather than $X(f^{(i)}, a_f)$ is negligible. Therefore, we use the shorthand $\mathcal{E}^{(i)} := \mathcal{E}^{(i)}(\delta)$ for the rest of the section. The following lemma from \cite{harrisregret} formalizes this notion.

\begin{lemma}\label{lemma:delta}
    For any sequence of followers $f_1, \ldots f_T$ and any leader policy $\pi$, there exists a policy $\pi^{(\cE)} : \cZ \rightarrow \cup_{\vz \in \cZ} \cE_{\vz}$ that, when given context $\vz$, plays a mixed strategy in $\cE_{\vz}$ and guarantees that $\sum_{t=1}^T u(\vz_t, \pi(\vz_t), b_{f_t}(\vz_t, \pi(\vz_t))) - u(\vz_t, \pi^{(\cE)}(\vz_t), b_{f_t}(\vz_t, \pi^{(\cE)}(\vz_t))) \leq 1$. 
    Moreover, the same result holds in expectation over any distribution over follower types. 
\end{lemma}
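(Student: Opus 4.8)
The plan is to round the policy $\pi$ context by context, exploiting the piecewise-linear structure of the leader's utility. Fix a context $\vz \in \cZ$ and let $T_\vz = \{t \in [T] : \vz_t = \vz\}$ be the set of rounds at which it appears. The strategy $\pi(\vz)$ lies in some cell of the arrangement of follower best-response regions at $\vz$, i.e.\ a set $\cX_\vz(\sigma)$ on which \emph{every} follower type's best response is pinned down by a single profile $\sigma$; this cell is convex and bounded, being an intersection of finitely many halfspaces with $\Delta(\cA)$. On $\cX_\vz(\sigma)$ the map $\phi(\vx) := \sum_{t \in T_\vz} u(\vz, \vx, b_{f_t}(\vz, \vx))$ is linear, since each follower's best response is constant there; in particular $\phi(\pi(\vz))$ equals exactly the utility $\pi$ accrues at context $\vz$ over the sequence.

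First I would pass to the closure: being linear on the compact convex set $\mathrm{cl}(\cX_\vz(\sigma))$, $\phi$ attains its maximum at an extreme point $\vx^\star$, and since $\pi(\vz) \in \cX_\vz(\sigma) \subseteq \mathrm{cl}(\cX_\vz(\sigma))$ we have $\phi(\vx^\star) \geq \phi(\pi(\vz))$. If $\vx^\star \in \cX_\vz(\sigma)$, then $\vx^\star \in \cE_\vz$ by \Cref{def:extreme-points}, and setting $\pi^{(\cE)}(\vz) := \vx^\star$ loses nothing at context $\vz$. If instead $\vx^\star \notin \cX_\vz(\sigma)$ — the cell fails to be closed there because of the follower's tie-breaking — \Cref{def:extreme-points} guarantees a point $\vx' \in \cE_\vz \cap \cX_\vz(\sigma)$ with $\| \vx' - \vx^\star \|_1 \leq \delta$. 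Since all payoff entries lie in $[0,1]$, each summand of $\phi$ is $1$-Lipschitz in $\ell_1$, so $\phi(\vx') \geq \phi(\vx^\star) - \delta |T_\vz| \geq \phi(\pi(\vz)) - \delta |T_\vz|$; setting $\pi^{(\cE)}(\vz) := \vx'$ therefore loses at most $\delta |T_\vz|$ at context $\vz$. In either case $\pi^{(\cE)}$ maps $\vz$ into $\cE_\vz$.

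Summing over the distinct contexts appearing in the sequence, the total loss is at most $\delta \sum_\vz |T_\vz| = \delta T$; since \Cref{def:extreme-points} permits any $\delta > 0$, taking $\delta = 1/T$ yields the bound of $1$. For the ``moreover'' part, I would replace each deterministic $f_t$ by a draw from a distribution: on the cell $\cX_\vz(\sigma)$ the expected per-round utility is a convex combination $\sum_i p_t(i)\, u(\vz, \vx, \sigma(i))$, still linear in $\vx$, so $\phi$ (now an expectation) is again linear on $\cX_\vz(\sigma)$ and the identical argument applies verbatim.

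The step I expect to be the main obstacle is ensuring that $\pi^{(\cE)}$ is an honest policy — a function of the context alone — rather than a per-round choice. Since a context may recur with different follower types, rounding $\pi$ against whichever follower happens to arrive could produce conflicting strategies for the same context. This is exactly why the rounding must take place inside a cell of the \emph{common refinement} of all follower types' best-response regions, where every follower's response is frozen at once, so that a single extreme point (or its $\delta$-approximation) serves all rounds at that context. A related technicality is that these cells need not be closed, so one cannot literally ``play the extreme point''; the $\delta$-approximate extreme points of \Cref{def:extreme-points} are precisely what make the rounded strategy both admissible (inside the cell) and almost lossless.
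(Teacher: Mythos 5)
Your proof is correct. Note that the paper does not actually prove this lemma itself --- it is imported verbatim from \citet{harrisregret} (``The following lemma from \citet{harrisregret} formalizes this notion'') --- so there is no in-paper proof to compare against; your argument supplies the standard one and matches the intended route: fix a context, work inside the cell of the common refinement of all follower types' best-response regions (where the leader's utility against every type is simultaneously linear), pass to an extreme point of the closure, and use the $\delta$-approximate extreme point of \Cref{def:extreme-points} with $\delta = 1/T$ so the per-round loss of $\delta$ telescopes to at most $1$; the same linearity argument gives the statement in expectation over follower types. Your closing observation --- that rounding must happen in the \emph{joint} cell so that a single point of $\cE_{\vz}$ serves every round in which the context $\vz$ recurs, keeping $\pi^{(\cE)}$ a genuine function of the context --- is exactly the technical point that makes the construction well-defined.
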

Given \Cref{lemma:delta} and for ease of notation, in this paper we assume that the leader will always set the parameter $\delta$, such that it does not affect our regret bounds.

\subsection{Alternate definition of the Stackelberg Littlestone dimension}

In this section we provide an alternate definition of the Stackelberg Littlestone dimension that can be defined for infinite trees. In \Cref{thm:equal-defs} we prove that for finite trees, this is equivalent to \Cref{def:sldim1}.

       \begin{definition}[(Edge-weighted) SL tree]\label{def:shattered2}
            An edge-weighted SL tree, $\cT$, is a rooted tree, in which each internal node is labeled with an instance in $\cZ$ and each edge corresponds to a label in $[K]$. For each node, $\vz_s \in \cT$ with children $\{\vz_{si}\}_{i \in M \subseteq [K]}$ the outgoing edges have weights, $\{\nu_{s}^{(i)}\}_{i \in M}$, such that for every $\vx \in \Delta(\cA)$ there exists $i \in M$ such that $r(\vz_s, \vx, f^{(i)}) \geq \nu_{s}^{(i)}$. We say that the tree is \emph{shattered} by a hypothesis class $\cH$ under Stackelberg game $\cG$ if for every path that traverses nodes $\vz_{\varnothing}, \dots \vz_{s_{\leq d}}$, there exists  $h \in \cH$ such that the label of the edge in the path leaving $\vz_{s \leq i}$ is $h(\vz_{s_{\le i}})$ for every $i$.
\end{definition}

\begin{definition}[(Edge-weighted) SL dimension]\label{def:sldim2}
        The SL dimension of a hypothesis class $\mathcal{H}$ with respect to a Stackelberg game, $\cG$, is defined as $\sldim(\cH) = \sup \{\kappa \in \R_{\geq 0}: \text{exists finite edge-weighted SL tree shattered by $\cH$ under $\cG$ where all root-to-leaf paths have cumulative weight at least $\kappa$}\}.$
\end{definition}

\begin{theorem}\label{thm:equal-defs}
    For finite trees, \Cref{def:sldim1} and \Cref{def:sldim2} coincide.
\end{theorem}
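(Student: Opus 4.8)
The plan is to prove that the two suprema agree by establishing an inequality in each direction, in both cases reusing the \emph{same} underlying node-and-edge-labeled tree: the shattering requirement is literally identical in \Cref{def:shattered1} and \Cref{def:shattered2}, so only the weight bookkeeping changes, and what I need is to convert the ``node weight at the root'' of \Cref{def:sldim1} into the ``smallest cumulative root-to-leaf edge weight'' of \Cref{def:sldim2} and back. For a node $\vz_s$ of a finite edge-weighted tree, write $w_s$ for the minimum over all downward paths from $\vz_s$ to a leaf of the sum of the edge weights on that path (with $w_s = 0$ at a leaf); this obeys the backward recursion $w_s = \min_{i \in M}\bigl(\nu_s^{(i)} + w_{si}\bigr)$ over the children $\{\vz_{si}\}_{i \in M}$, mirroring the recursion that defines $\rho_s$.

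\textbf{Node weights to edge weights.} Given a node-weighted SL tree $\cT$ shattered by $\cH$ under $\cG$ with root weight $\rho_\varnothing = \kappa$, put the weight $\nu_s^{(i)} := \rho_s - \rho_{si}$ on the edge from $\vz_s$ to child $\vz_{si}$. These are nonnegative because $r \ge 0$ forces $\rho_s = \inf_{\vx}\max_j\bigl(r(\vz_s,\vx,f^{(j)}) + \rho_{sj}\bigr) \ge \rho_{si}$, and they satisfy the defining constraint of \Cref{def:shattered2}: for any $\vx \in \Delta(\cA)$, choosing $i^\ast \in \argmax_j\bigl(r(\vz_s,\vx,f^{(j)}) + \rho_{sj}\bigr)$ (a finite max) gives $r(\vz_s,\vx,f^{(i^\ast)}) + \rho_{si^\ast} \ge \rho_s$, i.e.\ $r(\vz_s,\vx,f^{(i^\ast)}) \ge \nu_s^{(i^\ast)}$. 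Along any root-to-leaf path $\vz_\varnothing, \dots, \vz_{s_{\le d}}$ the cumulative edge weight telescopes to $\rho_\varnothing - \rho_{s_{\le d}} = \kappa$, so this edge-weighted tree witnesses that the \Cref{def:sldim2} supremum is at least $\kappa$; ranging over all node-weighted trees shows the \Cref{def:sldim1} value is at most the \Cref{def:sldim2} value.

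\textbf{Edge weights to node weights.} Conversely, take a finite edge-weighted SL tree $\cT$ shattered by $\cH$ under $\cG$ all of whose root-to-leaf paths have cumulative weight $\ge \kappa$ (equivalently $w_\varnothing \ge \kappa$), and regard $\cT$ as a node-weighted tree by evaluating the $\rho_s$ recursion on it. I claim $\rho_s \ge w_s$ at every node, by induction from the leaves. The base case holds with equality. For an internal node $\vz_s$, fix $\vx \in \Delta(\cA)$ and let $i^\ast$ be a child index with $r(\vz_s,\vx,f^{(i^\ast)}) \ge \nu_s^{(i^\ast)}$, which exists by \Cref{def:shattered2}; then, using this together with the inductive hypothesis $\rho_{si^\ast} \ge w_{si^\ast}$,
\[
\max_{i \in M}\bigl(r(\vz_s,\vx,f^{(i)}) + \rho_{si}\bigr) \;\ge\; \nu_s^{(i^\ast)} + w_{si^\ast} \;\ge\; \min_{i \in M}\bigl(\nu_s^{(i)} + w_{si}\bigr) \;=\; w_s.
\]
Since this holds for every $\vx$, $\rho_s = \inf_{\vx}\max_i\bigl(r(\vz_s,\vx,f^{(i)}) + \rho_{si}\bigr) \ge w_s$; in particular $\rho_\varnothing \ge w_\varnothing \ge \kappa$, so $\cT$ witnesses that the \Cref{def:sldim1} supremum is at least $\kappa$. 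Ranging over all finite edge-weighted trees gives the reverse inequality, and hence the two definitions coincide.

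\textbf{Anticipated difficulty.} I expect no genuine obstacle: once $w_s$ is in hand, each direction is a one-line telescoping or induction. The two points needing care are (i) checking nonnegativity of the induced edge weights $\rho_s - \rho_{si}$ — this is precisely where $r \ge 0$ enters, and is also what makes the $\nu_s^{(i)}$ legitimate weights in the sense of \Cref{def:shattered2} — and (ii) noting that in the edge-to-node direction the child $i^\ast$ witnessing the edge-weight inequality depends on $\vx$, so the bound $\max_i(\cdots) \ge w_s$ must be obtained pointwise in $\vx$ before passing to the infimum.
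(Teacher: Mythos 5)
Your proposal is correct, and it reaches the conclusion by a somewhat different and leaner route than the paper. The paper proves, by induction on depth, a per-tree \emph{equality}: the root node weight of \Cref{def:shattered1} equals the best achievable minimum cumulative path weight over all valid edge-weight assignments of \Cref{def:shattered2}, arguing a minimax-style exchange at each level (and, in the base case, selecting $\vx^{*}\in\arg\inf_{\vx}\max_i r(\vz_\varnothing,\vx,f^{(i)})$ and a maximizing assignment $N^*$, i.e.\ implicitly assuming these optima are attained). You instead prove only the equality of the two suprema, which is all the theorem asserts, by transferring witnesses in each direction on the same shattered tree: node-to-edge via the telescoping weights $\nu_s^{(i)}=\rho_s-\rho_{si}$ (whose validity and nonnegativity you check correctly, and which give every root-to-leaf path cumulative weight exactly $\rho_\varnothing$), and edge-to-node via the leaf-up induction $\rho_s\ge w_s$, carried out pointwise in $\vx$ before taking the infimum. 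This buys two things: it avoids any attainment assumptions on the infima/suprema (only the finite maximum over children is ever attained), and it replaces the paper's per-node minimax identity with a one-sided inequality in each direction, which is easier to verify. The price is that you do not recover the paper's stronger statement that, for a fixed shattered tree, the node-weight value coincides with the optimal edge-weight value; but that stronger fact is not needed for \Cref{thm:equal-defs}.
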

\begin{proof}
    It suffices to show that for any finite depth $d$ edge-weighted tree with all root-to-leaf paths having cumulative weight at least $\kappa$ can be shattered under \Cref{def:shattered2} if and only if there exists a tree with root node weight $\kappa$ that can be shattered under \Cref{def:shattered1}. We proceed by induction on the depth.

    (Base Case) Consider a depth-1 tree, $\cT_1$, with root node $\vz_\varnothing$. In the node-weight formulation, $\rho_\varnothing
=
\inf_{\vx \in \Delta(\mathcal{A})}
\max_{i \in \mathcal{H}(\vz_\varnothing)}
r(\vz_\varnothing,\vx, f^{(i)})$.
In the edge-weight formulation, the valid edge-weight assignments are $\mathcal{N}_{\vz_\varnothing}
=
\Bigl\{
N = \{\nu_\varnothing^{(i)}\}_{i \in \mathcal{H}(\vz_\varnothing)}
:\;
\forall \vx\; \exists i\; r(\vz_\varnothing,\vx, f^{(i)}) \ge \nu_\varnothing^{(i)}
\Bigr\}$. Their minimal path weight is $\max_{N \in \mathcal{N}_{\vz_\varnothing}}
\min_{i \in \mathcal{H}(\vz_\varnothing)}
\nu_\varnothing^{(i)}$.

 We will first show that $ \inf_{\vx \in \Delta(\cA)} \max_{i \in \cH(\vz_\varnothing)} r(\vz_\varnothing,\vx, f^{(i)}) \geq \max_{N \in \cN_{\vz_\varnothing}} \min_{i \in \cH(\vz_\varnothing)} \{\nu_\varnothing^{(i)} : \nu_\varnothing^{(i)} \in N\}$. Fix some $N \in \cN_{\vz_\varnothing}$. Then for every $\vx \in \Delta(\cA)$, $\exists i \in \cH(\vz_\varnothing)$ such that $r(\vz_\varnothing, \vx, f^{(i)}) \geq \nu_\varnothing^{(i)}$. This means that
    \begin{align*}
        \max_{i \in \cH(\vz_\varnothing)} r(\vz_\varnothing, \vx, f^{(i)}) \geq \min_{i \in \cH(\vz_\varnothing)} \{\nu_\varnothing^{(i)} : \nu_\varnothing^{(i)} \in N\} \;\;\; \forall \vx \in \Delta(\cA) \; \; \forall N \in \cN_{\vz_\varnothing}.
    \end{align*}
    Taking the infimum over $\vx$ and the maximum over $N \in \cN_{\vz_\varnothing}$ yields the inequality.
    
   For the reverse inequality, let  $\vx^{*} \in \arg\inf_{\vx \in \Delta(\cA)} \max_{i \in \cH(\vz_\varnothing)} r(\vz_\varnothing,\vx, f^{(i)})$ and $N^* = \{ \nu_\varnothing^{(i)} := \arg\max_{i \in \cH(\vz_\varnothing)} r(\vz_\varnothing, \vx^*, f^{(i)})\}_{i \in [K]}$.  We argue that $N^* \in \cN_{\vz_\varnothing}$, i.e., that
    \begin{align*}
        \forall \vx \in \Delta(\cA) \;\; \exists i \in \cH(\vz_\varnothing) \;\; \text{such that} \;\; r(\vz_\varnothing, \vx, f^{(i)}) \geq \arg\max_{i \in \cH(\vz_\varnothing)} r(\vz, \vx^*, f^{(i)})
    \end{align*}
    which holds since $\max_{i \in \cH(\vz_\varnothing)} r(\vz_\varnothing, \vx, f^{(i)}) \geq \inf_{\vx \in \Delta(\cA)} \max_{i \in \cH(\vz_\varnothing)} r(\vz_\varnothing,\vx, f^{(i)})$.
    
    (Inductive Step) Assume that the equivalence holds for all trees of depth $d-1$. Consider such a depth-$d$ tree, $\cT_{d}$, with root node weight $\vz_\varnothing$ and children $\vz_{\varnothing i}$ for $i \in \cH(\vz_{\varnothing})$. We want to show that the root node weight $\rho_\varnothing$ defined as in \Cref{def:shattered1} is equal to the weight of the minimum path in the tree, where the edge weights are defined as in \Cref{def:shattered2}. Concretely, we want to show that
    \begin{align*}
        \inf_{\vx \in \Delta(\cA)}\max_{j\in \cH(\vz_\varnothing)} \biggl[ r(\vz_\varnothing, \vx, f^{(j)})+\rho_{\varnothing j}\biggr] &= \sup_{\cN_d}\inf_{s \in S_d} \Biggl[ \sum_{i=0}^{d}  \nu_{s[:i]}^{( s[i+1])} \Biggr],
    \end{align*}
    where $\cN_d$ is the collection of sets of all possible edge weights, one for each node in the tree of depth $d$. This holds if and only if
    \begin{align*}
         \inf_{\vx \in \Delta(\cA)}\max_{j\in \cH(\vz_{\varnothing})} \biggl[ r(\vz_\varnothing, \vx, f^{(j)})+\rho_{\varnothing j}\biggr] &= \max_{\cN_0} \min_{i \in \cH(\vz_{\varnothing})}  \Biggl[ \nu_{\varnothing}^{(s[1])} + \max_{\cN_{d-1}} \inf_{s \in S_{d-1}} \sum_{i=1}^{d}  \nu_{s[:i]}^{( s[i+1])} \Biggr] 
    \end{align*}
    We know that for any $j \in \cH(\vz_{\varnothing})$ and $s \in S_d$ such that $s[1]=j$, $\rho_{\varnothing j} = \max_{\cN_{d-1}} \inf_{s \in S_{d-1}} \sum_{i=1}^{d}  \nu_{s[:i]}^{( s[i+1])}$ by the inductive hypothesis. Therefore, using an argument similar to the base case, the equation above holds.
\end{proof}

\section{Details and proofs from~\Cref{sec:distributional_learning}}

In our work, we consider two variants of sample complexity, PAC cut-off sample complexity (\Cref{def:cut-off-pac}) and PAC sample complexity defined as follows. \looseness-1

\begin{definition}[PAC sample complexity]
    The PAC sample complexity of a function class $\cH$ w.r.t. a Stackelberg game $\cG$ is defined as $m(\cH, \cG; \epsilon, \delta) = \inf_{A} m_A(\cH, \cG; \epsilon, \delta)$,
    where the infimum is over all learning algorithms. $m_A(\cH, \cG; \epsilon, \delta)$ is the smallest integer such that for any $m \geq m_A(\cH, \cG; \epsilon, \delta)$, every distribution $\cD$ on $\cZ$, and any true hypothesis $h^* \in \cH$, the expected loss of $A$, $\E_{\vz \sim \cD}\left[ r(\vz, A(\vz; S), f^{(h^*(\vz))}) \right]$, 
    is at most $\epsilon$ with probability at least $1-\delta$, where $S =\{(\vz_i, f^{(h^*(\vz_i))})\}_{i \in [m]}$ and $\vz_i \sim \cD$.\looseness-1
\end{definition}

According to the following lemma, any bound with respect to the PAC cut-off complexity immediately implies a bound on the PAC sample complexity as well.

\begin{lemma}[Lemma 1 in \cite{Attias23:regression}]\label{lemma:eq-btw-complexities}
    For every $\epsilon, \delta \in (0, 1)^2$ and every $\cH \subseteq [0, 1]^{\cZ}$, it holds that
    \begin{align*}
        m(\cH; \sqrt{\epsilon}, \delta, \sqrt{\epsilon}) \leq m(\cH; \epsilon, \delta) \leq m(\cH, \epsilon/2, \delta, \epsilon/2).
    \end{align*}
\end{lemma}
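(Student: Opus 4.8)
The plan is to prove the two inequalities separately; in each case I fix an arbitrary learning algorithm $A$, relate its cut-off guarantee to its expectation guarantee (or vice versa) on the \emph{same} high-probability event over the draw of the training sample $S$, and then take the infimum over $A$. Throughout I use that the leader's per-interaction loss $r(\vz, \hat{\vx}, f^{(h^*(\vz))})$ lies in $[0,1]$, which is immediate from the utilities being $[0,1]$-valued.

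\textbf{Left inequality, $m(\cH, \cG; \sqrt{\epsilon}, \delta, \sqrt{\epsilon}) \le m(\cH, \cG; \epsilon, \delta)$.} Fix $A$ and any $m \ge m_A(\cH, \cG; \epsilon, \delta)$. Then for every distribution $\cD$ on $\cZ$ and every $h^* \in \cH$, there is an event over $S \sim \cD^m$ of probability at least $1-\delta$ on which $\E_{\vz\sim\cD}\!\left[ r(\vz, A(\vz;S), f^{(h^*(\vz))}) \right] \le \epsilon$. On that event, Markov's inequality applied to the nonnegative random variable $r(\vz, A(\vz;S), f^{(h^*(\vz))})$ with $\vz \sim \cD$ gives
\[
\P_{\vz\sim\cD}\!\left[ r(\vz, A(\vz;S), f^{(h^*(\vz))}) > \sqrt{\epsilon} \right] \;\le\; \frac{\E_{\vz\sim\cD}\!\left[ r(\vz, A(\vz;S), f^{(h^*(\vz))}) \right]}{\sqrt{\epsilon}} \;\le\; \sqrt{\epsilon}.
\]
Hence $m \ge m_A(\cH, \cG; \sqrt{\epsilon}, \delta, \sqrt{\epsilon})$, so $m_A(\cH, \cG; \sqrt{\epsilon}, \delta, \sqrt{\epsilon}) \le m_A(\cH, \cG; \epsilon, \delta)$, and taking the infimum over $A$ proves the claim.

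\textbf{Right inequality, $m(\cH, \cG; \epsilon, \delta) \le m(\cH, \cG; \epsilon/2, \delta, \epsilon/2)$.} Fix $A$ and any $m \ge m_A(\cH, \cG; \epsilon/2, \delta, \epsilon/2)$. Then for every $\cD$ and $h^*$, on an event over $S$ of probability at least $1-\delta$ the random variable $L := r(\vz, A(\vz;S), f^{(h^*(\vz))})$, $\vz\sim\cD$, satisfies $\P[L > \epsilon/2] \le \epsilon/2$. On that event I split the expectation according to whether $L$ exceeds the cut-off:
\[
\E[L] \;=\; \E\!\left[ L\,\mathbbm{1}\{L \le \epsilon/2\} \right] + \E\!\left[ L\,\mathbbm{1}\{L > \epsilon/2\} \right] \;\le\; \frac{\epsilon}{2} + \P[L > \epsilon/2] \;\le\; \frac{\epsilon}{2} + \frac{\epsilon}{2} \;=\; \epsilon,
\]
where the first term uses $L \le \epsilon/2$ on the indicated event and the second uses $L \le 1$. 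Thus $m \ge m_A(\cH, \cG; \epsilon, \delta)$, and taking the infimum over $A$ gives the claim.

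The only point requiring care is that the infima defining the sample complexities need not be attained, so instead of ``substituting the optimal algorithm'' I argue for every $A$ and then pass to the infimum; and I rely on the good event over $S$ being \emph{identical} for the cut-off and expectation statements, so no union bound over events and no degradation of $\delta$ is incurred. Neither is a genuine obstacle — the mathematical content is exactly Markov's inequality in one direction and the bounded-loss expectation split in the other. (This is Lemma~1 of~\citet{Attias23:regression}; the argument is unchanged in our setting since all that is used is that the loss $r$ takes values in $[0,1]$.)
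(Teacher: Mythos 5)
Your proof is correct: Markov's inequality gives the left inequality and the bounded-loss split $\E[L]\le \epsilon/2 + \P[L>\epsilon/2]$ gives the right one, with the per-algorithm comparison and infimum over $A$ handled properly. The paper itself offers no proof of this lemma (it is imported verbatim as Lemma~1 of \citet{Attias23:regression}), and your argument is exactly the standard one behind that cited result, adapted only by noting that the Stackelberg loss $r$ takes values in $[0,1]$.
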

\distlb*
\begin{proof}
    Let $n_0 = \sndim(\cH)$. Then we know that for any $1 \leq n \leq n_0$ there exists a set $S_n \in \cZ^n$ and functions $g_0, g_1:\cZ \to \Delta(\cA)$ such that for any set $T \subseteq S$ there exists an $h$ that agrees with $g_0$ on all $\vz \in T$ and with $g_1$ for any $\vz \in S \setminus T$. We will show that for a given learning algorithm $A: (\cZ \times [K])^* \to \Delta(\cA)^\cZ$, there exists a distribution $\cD$ on $\cZ$ such that the algorithm requires at least $C_0 \frac{\sndim(\cH) +\log(\frac{1}{\delta})}{\epsilon}$ samples. To do that, we will construct a distribution with support the $n_0$-sized $\gamma$-shattered set $S$. We take $\delta < 1/15$ and define the distribution $\cD$ on $S$ as follows.
    \begin{align*}
        \P[\vz_1] = 1-16\epsilon, \;\;\;\; \P[\vz_i] = \frac{16\epsilon}{d-1} \; \forall i \in \{2, \dots, d-1\}.
    \end{align*}

    We draw a set $X$ of $m = \frac{d-1}{64\epsilon}$ i.i.d. samples from this distribution. We let $A$ be an algorithm that takes $X$ as the input and then produces a mapping $A(X): \cZ \to \Delta(\cA)$. We define
    \begin{align*}
        \text{er}_{\cD, \gamma, h}(A(X)) := \P_{\vz \sim \cD}[r(\vz, A(X)(\vz), f^{(h(\vz))}) > \gamma].
    \end{align*} 
    It suffices to establish that there exists $h^* \in \cH$ such that $\P_{X \sim \cD^m}\left[\text{er}_{\cD, \gamma, h^*}(A(X))  >\epsilon \right] > 1/15$. First consider the event
    \begin{align*}
        B: |\{\vx \in X : \vx \in \{\vz_2, \dots, \vz_n\}\}| \leq (d-1)/2.
    \end{align*}
    Using Markov's inequality, it is easy to show that $\P[B] \geq 1/2$. Let $U = \{\vz_2, \dots, \vz_n\} \setminus X$ be the points excluding $\vz_1$ that are unseen by the algorithm through the training set. Given $B$ we know that $|U| \geq \frac{d-1}{2}$. For the remainder of this section we will focus on
    \begin{align*}
        \text{er}'_{\cD, \gamma, h}(A(X)) &:= \P_{\vz \sim \cD}[r(\vz, A(X)(\vz), f^{(h(\vz))})\\ &> \gamma \land \vz \in \{\vz_2, \dots, \vz_{n_0}\}],
    \end{align*}
    which lower bounds $\text{er}_{\cD, \gamma, h}(A(X))$.
    Also, we consider the following way to choose a true hypothesis $h$. First, we choose vector $b$ uniformly at random from the set $\{0, 1\}^{n_0}$. We then set $h(\vz_i)= g_0(\vz_i)$ for all $b_i = 0$ and $h(\vz_i)= g_1(\vz_i)$ for all $b_i = 1$.
    We note that choosing a random $h$ in this way is equivalent to flipping a fair coin for each point in $S$ to determine its label. Since for any point $\vz_i \in U$ both labels $g_0(\vz_i)$ and $g_1(\vz_i)$ are consistent with the training set $X$ (by definition of an SN-shattered set), we have that $A(X)$ is independent of the labeling of points in $U$. We also know that if we condition on the point drawn from $\cD$ being some unseen point $\vu \in U$, we get that for any strategy $A(X)(\vu)\in \Delta(\cA)$ of the learner $\text{er}'_{\cD, \gamma, h}(A(X)) \geq 1$ if $h(\vu) = \arg\max_{j \in \{g_0(\vu), g_1(\vu)\}} r(\vu, A(X)(\vu),  f^{(j)} )$ and $0$ otherwise. 

    Since we are randomly choosing $h(\vz)$ to be $g_0(\vz_i)$ or $g_1(\vz_i)$, we have that the two conditions above happen with probability $1/2$ each. Given that there are at least $(d-1)/2$ unseen points, each sampled with probability $\frac{16\epsilon}{d-1}$, and each forcing the learner to make a $\gamma$-error with probability $1/2$, we get that
    \begin{align*}
        \E_{h, X}\left[\text{er}'_{\cD, \gamma, h}(A(X)) \; | \; B \right] > \frac12 \cdot \frac{16\epsilon}{d-1} \cdot \frac{(d-1)}2 = 4\epsilon.
    \end{align*}

    From this analysis and the probability of $B$, we get that
    \begin{align*}
        \E_{h, X}\left[\text{er}'_{\cD, \gamma, h}(A(X)) \right] &= \E_{h, X}\left[\text{er}'_{\cD, \gamma, h}(A(X)) \; | \; B \right] \cdot \P[B] \\
        & > 4\epsilon \cdot \frac12 = 2\epsilon.
    \end{align*}
    So there must exists some $h^*$ such that $\E_{X}\left[\text{er}'_{\cD, \gamma, h^*}(A(X)) \right] > 2\epsilon$. We take $h^*$ as the target concept and show that $A$ is likely to produce high error rate. We know that
    \begin{align*}
        &\P_{X} \left[ \text{er}_{\cD, \gamma, h^{*}}(A(X)) > \epsilon \right] \cdot \E_{X} \left[ \text{er}_{\cD, \gamma, h^*}(A(X)) \; | \; \text{er}_{\cD, \gamma, h}(A(X)) > \epsilon \right] \\
        & + (1- \P_{X} \left[ \text{er}_{\cD, \gamma, h^{*}}(A(X)) > \epsilon \right]) \cdot \E_{X} \left[ \text{er}_{\cD, \gamma, h^{*}}(A(X)) \; | \; \text{er}_{\cD, \gamma, h^{*}}(A(X)) \leq \epsilon \right] >2\epsilon
    \end{align*}
    Given that for any algorithm, $A$, it holds that $\text{er}'_{\cD, \gamma, h}(A(X)) \leq \P[\vz \in \{\vz_2, \dots, \vz_n\}] = 16\epsilon$ we get
    \begin{align*}
        \E_{X} \left[ \text{er}'_{\cD, \gamma, h}(A(X)) \; | \; \text{er}_{\cD, \gamma, h}(A(X)) > \epsilon \right] \leq 16\epsilon \;\; \text{for any} \;\; h
    \end{align*}
    which implies that for $h^*$
    \begin{align*}
        &\P_{X} \left[ \text{er}'_{\cD, \gamma, h^*}(A(X)) > \epsilon \right] \cdot 16\epsilon
        + (1- \P_{X} \left[ \text{er}'_{\cD, \gamma, h^*}(A(X)) > \epsilon \right]) \cdot \epsilon >2\epsilon \\
        & \iff \P_{X} \left[ \text{er}'_{\cD, \gamma, h^*}(A(X)) > \epsilon \right] >1/15.
    \end{align*}
    Therefore, the learner needs at least $m_A(\cH, \cG; \epsilon, \delta, \gamma) \geq \frac{d-1}{64\epsilon}$ samples.

    We now consider the case, where $X$ contains only $\vz_1$. This occurs with probability $(1-16\epsilon)^m \geq e^{-32m\epsilon}$, which is greater than $\delta$, when $m\leq \frac{\log(1/\delta)}{32\epsilon}$. Therefore we get that for any algorithm $A$:
    \begin{align*}
        m_A(\cH, \cG; \epsilon, \delta, \gamma) \geq \max \biggl\{ \frac{d-1}{64\epsilon}, \frac{\log(1/\delta)}{16\epsilon} \biggr\} = C_0 \frac{\sndim(\cH) +\log(\frac{1}{\delta})}{\epsilon}
    \end{align*}
    
    Lastly, we consider the case, where $\sndim(\cH)= \infty$. Then for any shattered set of size $n \in \N$, we can repeat the argument above, which would show that the cut-off sample complexity of any algorithm is at least $\Omega \left(\frac{n +\log(\frac{1}{\delta})}{\epsilon} \right)$, thus establishing that $m(\cH, \cG; \epsilon, \delta, \gamma) = \infty$.
\end{proof}

\distub*
\begin{proof}
     We work with the cut-off loss problem for some parameters $(\epsilon, \delta, \gamma) \in (0, 1)^3$. We define the cut-off loss
    \begin{align*}
        \text{er}_{\cD, \gamma}(h) := \P_{\vz \sim \cD} \left[ \inf_{\vx \in \Delta(\cA)}\max_{j \in \{h(\vz), h^*(\vz)\} } r(\vz, \vx, f^{(j)}) > \gamma \right]
    \end{align*}
    and the empirical cut-off loss for a dataset $S \in (\cZ \times [K])^n$
    \begin{align*}
        \hat{\text{er}}_{S, \gamma}(h) := \frac{1}{n}\sum_{(\vz_i, f_i) \in S} \mathbbm{1}\biggl\{ \inf_{\vx \in \Delta(\cA)} \max_{j \in \{h(\vz), h^*(\vz)\} } r(\vz, \vx, f^{(j)}) > \gamma\biggr\}
    \end{align*}
    and the empirical loss
    \begin{align*}
        \hat{\text{er}}_{S}(h) := \frac{1}{n}\sum_{(\vz_i, f_i) \in S} \mathbbm{1}\biggl\{ h(\vz_i) \neq h'(\vz_i) \biggr\}.
    \end{align*}

    For $\cM = \cZ \times [K]$ we are interested in the following set
    \begin{align*}
        Q_{\epsilon, \gamma} = \{M \in \cM^n: \exists h \in \cH: \hat{\text{er}}_{M}(h) = 0, \text{er}_{\cD, \gamma}(h) > \epsilon\}
    \end{align*}
    Now using a classic symmetrization argument \cite{uml, Attias23:regression}, we can upper bound the probability of $Q_{\epsilon, \gamma}$ with that of the following set
    \begin{align*}
        R_{\epsilon, \gamma} = \{(N, S) \in \cM^{n} \times \cM^{n} : \exists h \in \cH: \hat{\text{er}}_{N}(h) = 0, \hat{\text{er}}_{S, \gamma}(h) \geq \epsilon/2 \}.
    \end{align*}
        The idea here is to bound the probability that there exists a hypothesis yielding maximum utility for the leader over the sample, but is $\gamma$-far from the optimal hypothesis, by the probability that this hypothesis has poor performance on another sample.
        More concretely, it can be shown that $\cD^{n}(Q_{\epsilon, \gamma}) \leq 2\cD^{2n}(R_{\epsilon, \gamma})$ for $n \geq c/\epsilon$, for some constant $c$. To see this, we first write
    \begin{align*}
        \cD^{2n}(R_{\epsilon, \gamma}) = \int_{Q_{\epsilon, \gamma}} \cD^n (\{S \in \cM^{n} : \exists h \in \cH: \hat{\text{er}}_{N}(h) = 0, \hat{\text{er}}_{S, \gamma}(h) \geq \epsilon/2 \}) d\cD^{n}(N)
    \end{align*}
    For some $N \in Q_{\epsilon, \gamma}$ consider some $h_N \in \cH$ that satisfies $\hat{\text{er}}_{N, \gamma}(h_N) =0$ and $\text{er}_{\cD, \gamma}(h_N) > \epsilon$. We show that $\cD(\hat{\text{er}}_{S, \gamma}(h_S) \geq \epsilon/2) \geq 1/2$, from which the claim follows straightforwardly.
    We know that $\text{er}_{\cD, \gamma}(h_N) > \epsilon$, and therefore $n \cdot \hat{\text{er}}_{S, \gamma}(h_N) \leq n \cdot \hat{\text{er}}_{S}(h_N)$ follows a binomial distribution with probability of success at least $\epsilon$. We note that the inequality above follows since mispredicting the follower type is a necessary condition for suboptimal utility. Using a Chernoff bound, we can show that
    \begin{align*}
        \cD^n(S: \hat{\text{er}}_{S, \gamma}(h_N) < \epsilon/2) \leq e^{-n\epsilon/8}
    \end{align*}
    and for $n=c/\epsilon$, we get the desired result.

The next step is to use a random swap argument to upper bound $\cD^{2n}(R_{\epsilon, \gamma})$ with something that is independent of $\cD$. We let $\Gamma_n$ denote the set of all permutations on $\{1, \dots, 2n\}$ that potentially swap the $i$'th and $n+i$'th elements for all $i \in [n]$. So for all $\sigma \in \Gamma_n$ and $i \in [n]$, we have that either $\sigma(i)=i, \sigma(n+i)=n+i$ or $\sigma(i)=n+i, \sigma(n+i)=i$.
From Lemma 3 in \citet{Attias23:regression} (cf. also \cite{Anthony_Bartlett_1999}), we get that for some $\cM = \cZ \times [K]$ and distribution $\cD$ on $V$:
\begin{align*}
    \cD^{2n}(R_{\epsilon, \gamma}) = \E_{Z \sim \cD^{2n}} \P_{\sigma \sim \text{Unif}(\Gamma_n)}[\sigma(V) \in R_{\epsilon, \gamma}] \leq \max_{Z \in \cZ^{2n}} \P_{\sigma \sim \text{Unif}(\Gamma_n)} [\sigma(V) \in R_{\epsilon, \gamma}].
\end{align*}
The last step of the proof, requires reducing to learning a \emph{partial} binary hypothesis class and using a disambiguation argument to turn the partial class into a total one, as in \citet{Attias23:regression}. Concretely, we start by fixing a set $V \in \cM^{2n}$, where $v_i = (\vz_i, h^*(\vz_i))$ and a set $S = \{\vz_1, \dots, \vz_{2n}\}$. For the projection, $\cH(S)$ of the hypothesis class on the set, one can define the partial binary hypothesis class

\begin{align*}
    \cH := \left\{ h' \in \{0, 1, \star\}^{2n}: \exists h \in \cH|_S : \forall i \in [2n] \;\; h'(\vz_i) = \begin{cases}
        0 \;\; \text{if} \;\; h(\vz_i) = h^*(\vz_i) \\
        1 \;\; \text{if} \;\; \inf_{\vx \in \Delta(\cA)}\max_{j \in \{h(\vz), h^*(\vz)\} } r(\vz, \vx, f^{(j)}) > \gamma \\
        \star \;\; \text{if} \;\; \inf_{\vx \in \Delta(\cA)}\max_{j \in \{h(\vz), h^*(\vz)\} } r(\vz, \vx, f^{(j)}) \leq \gamma
    \end{cases}\right\}
\end{align*}

It is easy to verify that $\VC(\cH) \leq \sgdim(\cH)$. We can now turn this partial concept class into a total concept class of size $n^{O(\sgdim(\cH)\log(n))}$ using the following disambiguation argument.
\begin{lemma}[\cite{Alon21:partial}]
    Let $\cH'$ be a partial concept class on a finite space $S$, with $\VC(\cH')=d$. Then there exists a disambiguation $\tilde{\cH}'$ of $\cH'$ of size $|\tilde{\cH}'| = |S|^{O(d\log|S|)}$.
\end{lemma}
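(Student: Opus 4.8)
The plan is to obtain the disambiguation from a \emph{sample compression scheme} for $\cH'$. First recall what is needed: a disambiguation of $\cH'$ is a family $\tilde{\cH}' \subseteq \{0,1\}^S$ of total concepts such that every $h' \in \cH'$ is \emph{extended} by some member, where $\tilde h \in \{0,1\}^S$ extends $h'$ if $\tilde h(x) = h'(x)$ for every $x$ in the support $\mathrm{supp}(h') := \{x \in S : h'(x) \neq \star\}$. So it suffices to produce such a family of cardinality $|S|^{O(d\log|S|)}$.

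The core of the argument is to show that $\cH'$ admits a sample compression scheme of size $k = O(d\log|S|)$. Call a labeled sample $\big((x_1,y_1),\dots,(x_m,y_m)\big)$ \emph{realizable} if there is $h'\in\cH'$ with $h'(x_i) = y_i \ne \star$ for all $i$. A size-$k$ scheme is a pair of maps: a compression map sending each realizable sample to a sub-sequence of at most $k$ of its own examples, and a reconstruction map $\rho$ sending any labeled sample of length at most $k$ to a total concept in $\{0,1\}^S$, such that for every realizable sample, $\rho$ applied to its compression agrees with the original sample on all of its points. I would build such a scheme by the standard iterative (boosting-type) compression used for VC classes, carried over to the partial setting: on a realizable sample one runs a multiplicative-weights process over the $m$ sample points, at each round picking a total labeling of the current sample that is realized by some concept of $\cH'$ and is correct on a majority of the current weights, and stopping once the round-wise majority vote is correct on every sample point. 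The place where the VC dimension enters is that, restricted to any $t$-element subset of $S$, the collection of \emph{fully-defined} patterns of $\cH'$ is a genuine (total) concept class of VC dimension at most $d$, so by Sauer--Shelah it has at most $(t+1)^d$ patterns; this both supplies a consistent majority concept at each round and caps the number of rounds needed for the majority to become globally correct at $O(d\log m) = O(d\log |S|)$, so the committed points (those defining the chosen concepts) number $O(d\log|S|)$.

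Granting the scheme, the conclusion is short. For each $h'\in\cH'$, feed the scheme the realizable sample consisting of $h'$'s \emph{entire} support, $\{(x, h'(x)) : x \in \mathrm{supp}(h')\}$; let $\tilde h_{h'}$ be $\rho$ applied to its compression. By the defining property, $\tilde h_{h'}$ agrees with $h'$ on all of $\mathrm{supp}(h')$, i.e.\ $\tilde h_{h'}$ extends $h'$. Therefore $\tilde{\cH}' := \{\, \rho(T) : T \text{ a labeled sample on } S \text{ of length } \le k \,\}$ is a disambiguation of $\cH'$ (it contains every $\tilde h_{h'}$), and its size is at most the number of such samples, $\sum_{j\le k}(2|S|)^j \le (2|S|)^{k+1} = |S|^{O(d\log|S|)}$. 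I expect essentially all of the work to sit in the middle step --- checking that the iterative-compression construction for total VC classes transfers to partial classes and that each boosting round can indeed be charged against the Sauer--Shelah count so that $O(d\log|S|)$ rounds suffice; once a compression scheme of this size is in hand, the disambiguation and its cardinality bound follow immediately.
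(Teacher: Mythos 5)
The paper does not prove this lemma at all: it is imported verbatim from \citet{Alon21:partial} and used as a black box, so the relevant comparison is with the proof in that reference. Your overall architecture---show that $\cH'$ admits a sample compression scheme of size $k=O(d\log|S|)$, define $\tilde{\cH}'$ as the set of all outputs of the reconstruction map, and bound its size by the number of labeled compressed samples, at most $(2|S|)^{k+1}=|S|^{O(d\log|S|)}$---is exactly the route taken there, and your first and last steps (on a finite domain it suffices to extend each $h'$ on its support; the counting) are correct.

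The middle step, which you yourself flag as carrying all the weight, has a genuine gap as sketched. You justify transferring the Littlestone--Warmuth/boosting compression to the partial setting by observing that the fully-defined patterns of $\cH'$ on any finite subset form a total class of VC dimension at most $d$ and invoking Sauer--Shelah. This does not suffice, for two reasons. First, reconstruction: each round's weak hypothesis must be recoverable from its $O(d)$ stored points alone, but ``a total labeling of the current sample realized by some concept of $\cH'$'' depends on the entire sample; if the reconstruction instead picks an arbitrary $h'\in\cH'$ consistent with the stored block, that $h'$ may take the value $\star$ on the remaining sample points, so the round-wise majority vote is neither well defined nor controlled there, and arbitrarily filling in the $\star$'s destroys the weak-learning guarantee. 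Second, and more fundamentally, the ``every hypothesis consistent with a weighted $O(d)$-point subsample has small weighted error'' step is a uniform-convergence statement, and uniform convergence provably fails for partial classes of bounded VC dimension---this is one of the central points of \citet{Alon21:partial}. The way to complete your plan (and what the reference actually does) is to use their nontrivial theorem that VC dimension $d$ partial classes are PAC learnable via the one-inclusion algorithm, which outputs \emph{total} predictors and is a deterministic function of its input block; this supplies weak hypotheses that the reconstruction can re-compute from the compressed points, and boosting then gives the $O(d\log m)$ compression. With that substitution your argument goes through; as written, the compression scheme you describe is not valid, and the missing ingredient is precisely the main technical content of the cited paper rather than a routine adaptation.
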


We have that $\sigma(V) \in R_{\epsilon, \gamma}$ iff there exists some $h \in \cH$ such that $h(\vz_{\sigma(i)}) = h^{*}(\vz_{\sigma(i)}) \;\;\; \forall \; i \in [n]$ and 
\begin{align*}
     \frac{1}{n}\sum_{i=1}^n \mathbbm{1}\biggl\{ \inf_{\vx \in \Delta(\cA)} \max_{j \in \{h(\vz_{\sigma(n+i)}), h^*(\vz_{\sigma(n+i)})\} } r(\vz_{\sigma(n+i)}, \vx, f^{(j)}) > \gamma\biggr\} \geq \epsilon/2.
\end{align*}
We know that if there exists such an $h$ that is correct on the first $n$ points but $\gamma$-far on at least $\epsilon n /2$ of the last $n$ points, there exists a function in $\tilde{\cH}'$, that is $0$ on the first $n$ points and $1$ on the points on which $h$ is $\gamma$-far. We also know that if for some $\sigma \in \Gamma_n$, $\tilde{h}' \in \cH'$ is a witness that $\sigma(V) \in R_{\epsilon, \gamma}$, then at least one of $\tilde{h}'(i), \tilde{h}'(i)$ must be 0, and at least $n\epsilon/2$ elements must be non-zero. Thus for a permutation drawn uniformly at random, the probability that all 1's are on the last n elements is at most $2^{-n\epsilon/2}$. Also since $|\tilde{\cH}'|$ is bounded we an use a union bound that yields $\P_{\sigma \sim \text{Unif}(\Gamma_n)} [\sigma(V) \in R_{\epsilon, \gamma}] \leq n^{O(\sgdim(\cH)\log(n))}2^{-n\epsilon/2}$. Given that $V \in \cM^{2n}$, we can take a $\max$ over all $V \in \cM^{2n}$, which gives the desired result.
\end{proof}

\subsection{Learning without contexts}\label{sec:uncontextual}
 
Suppose there is a fixed, unknown distribution, $\cD$, over the set of possible follower types. We denote with $u(\vx, b_{f^{(i)}}(\vx))$ the leader's payoff when committing to a (possibly mixed) strategy $\vx \in \Delta(\cA)$ against a follower of type $f^{(i)}$. We use the notation $u_{\cD}(\vx) = \mathbb{E}_{f \sim \cD}[u(\vx, b_{f}(\vx))]$ and given a sample $S = \{f_1, \dots, f_m\}$ of follower types, we write $u_{S}(\vx) = \frac{1}{m} \sum_{f \in S} u(\vx, b_{f}(\vx))$.

Given such a sample, the goal of the leader is to choose a mixed strategy that fares well in expectation against any new sample from $\cD$. Concretely, we want to choose $\hat{\vx} \in \Delta(\cA)$ such that
    \begin{align*}
        \mathbb{P}_{S}\left[ \sup_{\vx}u_{\cD}(\vx)-u_{\cD}(\hat{\vx}) > \epsilon \right] \leq \delta.
    \end{align*}
To achieve this, it suffices to show a generalization guarantee \emph{i.e.}, a bound on the difference between the leader's expected payoff and the empirical payoff over a set of follower types for any possible mixed strategy.
\begin{definition}[Generalization guarantee]
    A generalization guarantee for the leader in a Stackelberg game, $\cG$, is a function  $\epsilon_{\cG}: \mathbb{Z}_{\geq 1} \times (0, 1) \to \mathbb{R}_{\geq 0}$ such that for any $\delta \in (0, 1)$, any $m \in \mathbb{Z}_{\geq 1}$, and any distribution $\cD$ over the follower types, with probability at least $1-\delta$ over the draw of a set $S \sim \cD^{m}$ for any $\vx \in \Delta(\cA)$, the difference between the average payoff of $\vx$ over $S$ and the expected payoff of $\vx$ over $\cD$ is at most $\epsilon_{\cG}(m, \delta)$:
    \begin{align*}
          \mathbb{P}_{S \sim \cD^{m}}\left[ \exists \vx \in \Delta(\cA) \; \; \text{s.t.} \; \; |u_{\cD}(\vx)-u_{S}(\vx)| > \epsilon_{\cG}(m, \delta) \right] \leq \delta
    \end{align*}
\end{definition}
This implies a relationship between the utility of the leader's strategy maximizing the average payoff over the training data to the expected utility of the optimal strategy. Formally, for $\hat{\vx} \in \arg \sup_{\vx \in \Delta(\cA)} u_{S}(\vx)$ and $\vx^* \in \arg\sup_{\vx \in \Delta(\cA)} u_{\cD}(\vx)$
    \begin{align*}
          \mathbb{P}_{S \sim \cD^{m}} \left[ |u_{\cD}(\vx^*)-u_{S}(\hat{\vx})| > 2\epsilon_{\cG}(m, \delta) \right] \leq \delta.
    \end{align*}

We let $\cP = \{u_{\vx}: [K] \to \R \; \; \text{such that} \; \; u_{\vx}(f) = u(\vx, b_f(\vx))\}$ be the set of payoff functions for the leader. Each payoff function is parameterized by the leader's mixed strategy, takes as input the follower's type, and outputs the leader's utility. We also define the dual class $\tilde{\cP} = \{\tilde{u}_{f}: \Delta(\cA)\to \R \; \; \text{such that} \; \; \tilde{u}_{f}(\vx) = u(\vx, b_f(\vx))\}$. Consider the following definition, introduced by \citet{balcan2018general} and adapted to our setting.
    \begin{definition}
        A class $\cH$ of hypotheses, $h: [K] \to \R$, is \emph{$(d, t)$-delineable} if
        \begin{itemize}
            \itemsep0em 
            \item The class $\cH$ consists of hypotheses that are parameterized by vectors $\vx$ from a set $\cX \subseteq \R^{d}$,
            \item For any $f \in [K]$, there is a set $\cJ$ of $t$ hyperplanes such that for any connected component $\cX' = \cX \setminus \cJ$, the dual function $ h_{f}(\vx)$ is linear over $\cX'$.
        \end{itemize}
    \end{definition}
    We can now show that the primal class $\cP$ is deliniable, which is a property we will use to show a generalization guarantee.
    \begin{restatable}{lemma}{deliniable}\label{lemma:deliniable}
        The class of leader payoff functions, $\cP$ that are parameterized by the leader's mixed strategy and take as input the follower's type is $(|\cA|, \binom{|\cA_f|}{2})$-delineable.
    \end{restatable}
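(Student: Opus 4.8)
The plan is to fix an arbitrary follower type $f^{(i)}$ and show that its dual payoff function $\tilde u_{f^{(i)}}(\vx) = u(\vx, b_{f^{(i)}}(\vx))$ is piecewise linear in the leader's mixed strategy $\vx$ over the simplex $\Delta(\cA) \subseteq \R^{|\cA|}$, with at most $\binom{|\cA_f|}{2}$ hyperplanes carving out the pieces. Since the parameter $\vx$ ranges over $\Delta(\cA) \subseteq \R^{|\cA|}$, the parameter dimension is $d = |\cA|$, and since the same hyperplane bound works uniformly over all follower types, this gives $(|\cA|, \binom{|\cA_f|}{2})$-delineability.

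First I would write the follower's best response explicitly: $b_{f^{(i)}}(\vx) = a_f$ exactly when $\sum_{a_l \in \cA} \vx[a_l]\big(u_{f^{(i)}}(a_l, a_f) - u_{f^{(i)}}(a_l, a_f')\big) \ge 0$ for all $a_f' \in \cA_f$ (with ties broken by the follower's fixed order). For each unordered pair $\{a_f, a_f'\} \subseteq \cA_f$, introduce the hyperplane $H_{\{a_f, a_f'\}} = \big\{\vx : \sum_{a_l} \vx[a_l]\big(u_{f^{(i)}}(a_l, a_f) - u_{f^{(i)}}(a_l, a_f')\big) = 0\big\}$, and let $\cJ$ collect all of these, so $|\cJ| \le \binom{|\cA_f|}{2}$. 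Then on any connected component $\cX'$ of $\Delta(\cA) \setminus \cJ$, each linear form appearing above is continuous and never zero on $\cX'$, hence has constant sign there; consequently the strict ordering of the follower payoffs $\{\sum_{a_l} \vx[a_l]\, u_{f^{(i)}}(a_l, a_f)\}_{a_f \in \cA_f}$ is the same for every $\vx \in \cX'$ and has no ties. Thus $b_{f^{(i)}}(\vx)$ equals a single fixed action $a^\star$ throughout $\cX'$, and so $\tilde u_{f^{(i)}}(\vx) = \sum_{a_l} \vx[a_l]\, u(a_l, a^\star)$ is linear on $\cX'$ --- exactly the delineability condition for type $f^{(i)}$.

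The main point that needs care is the follower's tie-breaking rule: on the hyperplanes in $\cJ$ the follower is indifferent between two actions, so the best-response regions are in general neither open nor closed, and one cannot naively claim that $\tilde u_{f^{(i)}}$ is linear on each best-response region. The delineability definition sidesteps this precisely because it only asks for linearity on the connected components of $\Delta(\cA) \setminus \cJ$, i.e., after excising the hyperplanes, so tie-breaking never enters the argument. A minor bookkeeping remark is that distinct pairs may induce the same hyperplane, or a degenerate one equal to all of $\R^{|\cA|}$ (when $f^{(i)}$ is payoff-indifferent between the two actions for every $\vx$); such elements can simply be dropped, which only shrinks $\cJ$ and so preserves the bound $t = \binom{|\cA_f|}{2}$.
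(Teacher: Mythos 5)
Your proof is correct and follows essentially the same route as the paper's: for a fixed follower type, the pairwise comparison of follower actions yields at most $\binom{|\cA_f|}{2}$ hyperplanes in the leader's strategy space, and on each connected component of the simplex minus these hyperplanes the best response is a single fixed action, making the dual payoff linear there. Your version simply spells out the details (explicit hyperplane equations, constant-sign argument, and the tie-breaking/degeneracy remarks) that the paper's two-sentence proof leaves implicit.
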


    \begin{proof}
        The statement holds from the following two facts. First, given a fixed follower type, $f$, we have that $\tilde{u}_{f}(\vx)$ is a piecewise linear function. Second, for any fixed follower type, $f$, we have that $u_{f}(\vx)$ will have at most $|A_f| (|A_f| -1)$ hyperplanes going through it. Each hyperplane corresponds to a change in the follower's best-response strategies, from one pure action to another. Therefore, there can be at most $\binom{|\cA_f|}{2}$ hyperplanes.
    \end{proof}

    Next, we present the definition of \emph{pseudo-dimension} \cite{pollard2012convergence}, which is complexity measure that characterizes a hypothesis class and in our setting helps us reason about how difficult the learning task of the leader is. Informally speaking, the more complex the hypothesis class, $\cP$, the harder it is for the leader to distinguish between strategies, $\vx$, which parameterize the class, and therefore, more samples are needed to learn. 
    \begin{definition}[Pseudo-dimension]
        Let $\cF$ be a class of functions $f: A \to \R$, with an abstract domain, $A$. We say that $y^{(1)}, \ldots, y^{(n)}$, \emph{witness the shattering} of $S = \{x^{(1)}, \ldots, x^{(n)}\} \subseteq A$ if for all $T \subseteq S$, there exists a function $f_{T} \in \cF$, such that for all $x^{(i)} \in T, \; f_{T}(x^{(i)}) \leq y^{(i)}$ and for all $x^{(i)} \not\in T, \; f_{T}(x^{(i)}) > y^{(i)}$.
        We then say that the \emph{pseudo-dimension} of $\cF$, denoted $\pdim(\cF)$, is the size of the largest set that can be shattered by $\cF$.
    \end{definition}
    
    Using \Cref{lemma:deliniable} and results from the data-driven algorithm design literature \cite{Balcan_2021, 10.1145/3676278, balcan2025generalization}, we immediately get the following bound on the pseudo-dimension. 
    \begin{theorem}[Adaptation of Lemma 3.10 in \citet{balcan2018general}]\label{thm:gen-guarantee}
        $\pdim(\cP) \leq 9|\cA|\log\left(4|\cA| |\cA_f|^2\right)$.
    \end{theorem}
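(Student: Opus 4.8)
The plan is to instantiate the generic pseudo‑dimension bound for delineable function classes, i.e.\ Lemma~3.10 of \citet{balcan2018general}, applied to $\cP$ via \Cref{lemma:deliniable}; for completeness I would reprove the argument. By \Cref{lemma:deliniable}, $\cP$ is $(d,t)$‑delineable with $d=|\cA|$ and $t=\binom{|\cA_f|}{2}\le|\cA_f|^2$, so it suffices to show that this forces $\pdim(\cP)\le 9|\cA|\log(4|\cA||\cA_f|^2)$. The engine is a dual counting argument: the parameter space $\cX=\Delta(\cA)\subseteq\R^{d}$ is low‑dimensional, and delineability says that, for each fixed follower type, the dependence of the leader's payoff on $\vx$ is piecewise linear with few pieces; bounding the number of achievable ``above/below the witness'' patterns by a polynomial in the sample size then pins down the pseudo‑dimension.

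Concretely, suppose $\cP$ pseudo‑shatters a set of $m$ follower types $\{f^{(1)},\dots,f^{(m)}\}\subseteq[K]$ with witnesses $y_1,\dots,y_m\in\R$: for every $T\subseteq[m]$ there is a strategy $\vx_T\in\Delta(\cA)$ with $\tilde u_{f^{(i)}}(\vx_T)\le y_i$ exactly when $i\in T$. By delineability, for each $i$ there is a set $\cJ_i$ of at most $t$ hyperplanes in $\R^{d}$ such that $\vx\mapsto\tilde u_{f^{(i)}}(\vx)$ is affine on every connected component of $\cX\setminus\cJ_i$. Let $\cJ=\bigcup_{i=1}^m\cJ_i$, a family of at most $mt$ hyperplanes; the standard arrangement bound gives that $\R^{d}\setminus\cJ$ has at most $(e\,mt/d)^{d}$ cells (if $mt<d$ the claim is trivial). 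On any one cell $C$ all $m$ dual functions are simultaneously affine, so each comparison $\tilde u_{f^{(i)}}(\vx)\le y_i$ is membership in a halfspace; refining $C$ by the $m$ hyperplanes $\{\vx:\tilde u_{f^{(i)}}(\vx)=y_i\}$ and their faces cuts $C$ into at most $(c\,m/d)^{d}$ pieces on each of which the whole pattern $\bigl(\mathbbm{1}\{\tilde u_{f^{(i)}}(\vx)\le y_i\}\bigr)_{i\le m}$ is constant. Hence the number of distinct patterns $\cP$ realizes on these $m$ types is at most $(e\,mt/d)^{d}\cdot(c\,m/d)^{d}\le (C\,mt)^{2d}$ for an absolute constant $C$.

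Since pseudo‑shattering requires all $2^m$ patterns to appear,
\[
2^m\;\le\;(C\,m\,t)^{2d},\qquad\text{hence}\qquad m\;\le\;2d\log_2 m+2d\log_2(Ct).
\]
Applying the usual self‑bounding lemma (if $m\le a\log_2 m+b$ with $a\ge1$ then $m\le 4a\log_2(4a)+2b$) with $a=2d$ and $b=2d\log_2(Ct)$ gives $m=O\bigl(d\log d+d\log t\bigr)$; carrying the explicit constants — being a little careful in the arrangement count, using that $\Delta(\cA)$ is in fact only $(|\cA|-1)$‑dimensional and that $t=\binom{|\cA_f|}{2}\le|\cA_f|^2/2$, and invoking a sharp form of the self‑bounding step — yields $m\le 9|\cA|\log(4|\cA||\cA_f|^2)$. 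As the shattered set was arbitrary, this bounds $\pdim(\cP)$.

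The only step needing genuine care is the claim that the pattern vector is constant on each piece of the refined arrangement: because the pseudo‑dimension definition splits on $\le$ versus $>$, a point lying \emph{on} a threshold hyperplane $\{\tilde u_{f^{(i)}}(\vx)=y_i\}$ can carry a different bit than nearby interior points, so one must count not merely the full‑dimensional cells of the two arrangements but all of their lower‑dimensional faces. This is handled by the standard fact that an arrangement of $N$ hyperplanes in $\R^{d}$ has at most $(c'N/d)^{d}$ faces of all dimensions combined, which keeps the estimate polynomial; this is where a loose argument would slip. Everything else — the arrangement counts and the transcendental inequality $2^m\le\mathrm{poly}(m)$ — is routine, and matches the computation already carried out in \citet{balcan2018general}.
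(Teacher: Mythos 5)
Your proposal is correct and takes essentially the same route as the paper: the paper obtains the bound by combining \Cref{lemma:deliniable} with the generic pseudo-dimension bound for $(d,t)$-delineable classes (Lemma~3.10 of \citet{balcan2018general}), instantiated with $d=|\cA|$ and $t=\binom{|\cA_f|}{2}$, which is exactly what you do, with your cell-counting argument being a re-derivation of that cited lemma. The only looseness — the explicit constants $9$ and $4$ — is deferred to the cited lemma in both your write-up and the paper's.
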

    It is known that the finiteness of the pseudo-dimension of a function class is a sufficient condition to show a generalization guarantee for the class \cite{pollard2012convergence, dudley2010universal}. In our case, the above bound implies a generalization guarantee with function
    \begin{align*}
        \epsilon_{\cG}(m, \delta) = O\left(\sqrt{\frac{|\cA|\log(|\cA||\cA_f|^2)}{m}} +\sqrt{\frac{\ln(1/\delta)}{m}}\right).
    \end{align*}
\xhdr{Discussion.} We note that the above generalization result is particularly strong. Informally, it asserts that for any distribution over the follower types, if the leader has so far encountered a number of followers that scales linearly with the size of her action set and logarithmically with the follower's action set, it suffices for her to play a strategy that maximizes her average utility over the followers seen so far. This result holds even for an \emph{infinite} number of follower types and extends prior work on learning in Stackelberg games without contextual information~\cite{letchford2009learning}.

\begin{remark}
It is worth noting that even when there is only one context (i.e. $\cZ$ is a singleton), our results in this section are incomparable with those in \Cref{sec:distributional_learning}. This is because in this section, the goal of the leader is to learn given a deterministic mapping from contexts to follower types. So, if the contexts are the same between all examples in the training set, the follower types corresponding to these examples will also be the same. In contrast, the context-free setting we consider in \Cref{sec:uncontextual} allows for a distribution over follower types. Thus the two bounds are generally incomparable.
\end{remark}

\section{Appendix for Section~\ref{sec:efficient}: Computational complexity considerations}\label{app:comp}

\thmefficientalg*
\begin{proof}
    Consider a true multiclass decision list $h^{*}:\{0, 1\}^n \to [K]$. We show that \Cref{alg:mdl} makes at most $O(nKL)$ mistakes in the worst case, where $n = |\cZ|$ is the number of variables, $L$ is the length of the target decision list and $K$ is the number of possible labels. We make the following three observations: (1) At every timestep in which the algorithm makes a mistake, at least one rule is moved down by one level. (2) Any rule will be moved down a level at most $L+1$ times, where $L$ is the length of the target decision list. (3) By induction, the $i$-th rule of the target hypothesis will never be moved below the $i$-th level of $\tilde{h}$. Therefore the number of mistake that the algorithm makes is upper bounded the by $(L+1)(2nK+2)$, where $2nK+2$ is the number of all possible rules. 
\end{proof}

\begin{algorithm}[t!]
        \caption{Multiclass Decision Lists Learning Algorithm}\label{alg:mdl}
    \begin{algorithmic}[1]
        \STATE \textbf{input:} input variable space, $\cZ$, and output label space, $[K]$. 
        \STATE Define the set of rules $R =\{ (\ell_i, k_i) \; \text{such that} \; \ell_i \in \{x_i, \bar{x}_i\}, x_i \in \cZ, \; k_i \in [K] \}$
        \STATE Set $\tilde{h} = [R]$
        \FOR{$t = 1, 2, 3,  \ldots $}
            \STATE Observe $\vz_t$
            \FOR{ each set $S = \{(\ell_1, k_1), \ldots\} \in \tilde{h}$}
                \IF{ there exists $(\ell_i, k_i) \in S$ such that $\ell_i = \vz[i]$}
                    \STATE Output label $\tilde{k}_t = k_i$
                    \STATE Observe the true label $h^{*}(\vz)$
                    \IF{$\tilde{k}_t \neq h^{*}(\vz)$}
                        \STATE Set $M = \{(\ell_i, k_i) \in S \; | \; \ell_i = \vz[i] \land k_i \neq k_t\}$
                        \STATE Replace $S$ in $\tilde{h}$ with $S \setminus M$ followed by $M$
                    \ELSE
                        \STATE Break
                    \ENDIF
                \ELSE
                    \STATE Continue
                \ENDIF
            \ENDFOR
    \ENDFOR
\end{algorithmic}
\end{algorithm}

\end{document}